\newcommand{\argmin}[1]{\underset{#1}{\mathrm{argmin}}}
\newcommand{\supp}{\operatorname{supp}} 
\newtheorem{theorem}{Theorem}[section]
\newtheorem{lemma}[theorem]{Lemma}
\newtheorem{proposition}[theorem]{Proposition}
\theoremstyle{definition}
\newtheorem{definition}[theorem]{Definition}
\newcommand{\prox}{{\rm prox}}
\newcommand{\rprox}{{\rm rprox}}
\theoremstyle{remark}
\newtheorem{remark}[theorem]{Remark}
\numberwithin{equation}{section}
\begin{document}

\title{Recovery guarantees for polynomial approximation from dependent data with outliers}

\author{Lam Si Tung Ho} 
\address{Department of Mathematics and Statistics, Dalhousie University, Halifax, NS, Canada}
\email{Lam.Ho@dal.ca}

\author {Hayden Schaeffer}
\address{Department of Mathematical Sciences, Carngie Mellon University, Pittsburgh, PA, USA}
 \email{schaeffer@cmu.edu}
 
\author{ Giang Tran}
\address{Department of Applied Mathematics, University of Waterloo, Waterloo, ON, Canada} 
  \email{giang.tran@uwaterloo.ca}
  
\author{ Rachel Ward}
\address{Department of Mathematics, The University of Texas at Austin, Austin, TX, USA}
   \email{rward@math.utexas.edu}

\begin{abstract}
  Learning non-linear systems from noisy, limited, and/or dependent data is an important task across various scientific fields including statistics, engineering, computer science, mathematics, and many more. In general, this learning task is ill-posed; however, additional information about the data's structure or on the behavior of the unknown function can make the task well-posed. In this work, we study the problem of learning nonlinear functions from corrupted and dependent data. The learning problem is recast as a sparse robust linear regression problem where we incorporate both the unknown coefficients and the corruptions in a basis pursuit framework. The main contribution of our paper is to provide a reconstruction guarantee for the associated $\ell_1$-optimization problem where the sampling matrix is formed from dependent data. Specifically, we prove that the sampling matrix satisfies the null space property and the stable null space property, provided that the data is compact and satisfies a suitable concentration inequality. We show that our recovery results are applicable to various types of dependent data such as exponentially strongly $\alpha$-mixing data, geometrically $\mathcal{C}$-mixing data, and uniformly ergodic Markov chain. Our theoretical results are verified via several numerical simulations.
\end{abstract}

\maketitle

\section{Introduction}
In the past few decades, there has been a rapid growth of interest in automated learning from data across various scientific fields including statistics \cite{vapnik1999overview}, engineering \cite{ljung1998system}, computer science \cite{jordan2015machine,roweis2000nonlinear}, mathematics, and many more. An overview of machine learning problems in a wide range of contexts (statistical learning theory, pattern recognition, system identification, deep learning, and so on) can be found in \cite{friedman2001elements, cherkassky2007learning, abu2012learning, goodfellow2016deep}. One of the main paradigms is to learn an unknown target function from a given collection of input-output pairs (supervised learning), which can be rephrased as the problem of finding an approximation of a multi-dimensional function. For example, in \cite{poggio1990regularization,poggio1990networks}, the authors demonstrated a connection between approximation theory and regularization with feedforward multilayer networks. In general, learning a smooth function from data is ill-posed unless a priori information about either the data structure or the generating function is provided \cite{tikhonov1977methods,  novak2009approximation, fornasier2012learning}.

One of the well-known methods to make the learning problem well-posed is to exploit additional properties of the target function \cite{friedman1994overview}. For example, if the target function depends only on a few active directions associated with a suitable random matrix, the function can be recovered from a small number of samples \cite{fornasier2012learning}. On the other hand, many well-known learning methods consider the target function in a particular function class (such as radial basis functions, projection pursuit, feed-forward neural networks, and tensor product methods)  and add a penalty (such as Tikhonov regularization or sparse constraints) to the associated parameter estimation problem.

Recently, sparse models combined with data-driven methods have been investigated intensively for learning nonlinear partial differential equations, nonlinear dynamical systems, and graph-based networks. The model selection problem for dynamical systems from time series dates back to \cite{crutchfield1987equation} where the authors investigate the concepts from dynamical system theory to recover the underlying structure from data. In \cite{yao2007modeling}, the authors construct a sampling matrix from the data matrix and its power to recover the ordinary differential equations  and find an optimal Kronecker product representation for the governing equations. Furthermore, based on the observation that many governing equations have a sparse representation with respect to high-dimensional polynomial spaces, the authors in \cite{brunton2016discovering} developed the SINDy algorithm which uses that sampling matrix and a sequential least-square thresholding algorithm to recover the governing equations of some unknown dynamical systems. The convergence of the SINDy algorithm is provided in \cite{zhang2018convergence}. 
 A group-sparse model was proposed in \cite{schaeffer2017learning} to learn governing equations from a family of dynamical systems with bifurcation parameters. By exploiting the cyclic structure of many nonlinear differential equations, the authors in \cite{schaeffer2018extracting} proposed an approach to identify the active basis terms using fewer random samples (in some cases on the order of a few snapshots). For the noisy case, in \cite{schaeffer2017sparse} the authors use the integral formulation of the differential equation to reduce the effect of noise and identify the model from a smoother basis set. To learn a nonlinear partial differential equation from spatio-temporal  dataset, the authors in \cite{schaeffer2017learningPDE} proposed a LASSO-based approach using a dictionary of partial derivatives. In \cite{rudy2017data}, the authors developed an adaptive ridge-regression version of \cite{brunton2016discovering} for learning nonlinear PDE, while in \cite{raissi2018hidden} a hidden physics model based on Gaussian processes was presented. On the other hand, the data are often contaminated by noise, contain outliers, have missing values, or have a limited amount of samples. When the given data are limited, there are several works addressing learning problems ranging from sampling strategies in high-dimensional dynamics using random initial conditions \cite{schaeffer2017extracting}, to a weighted $\ell_1$-minimization on the lower set \cite{rwl1weighted,chkifa2018polynomial},  model predictive control using SINDy \cite{kaiser2017sparse}, and sample complexity reduction to linear time-invariant systems \cite{fattahi2018data}. In \cite{shin2018sequential}, the authors proposed a method to approximate an unknown function from noise measurements via sequential approximation. Geometric methods, such as \cite{liao2016learning}, can be used to approximate functions in high-dimensions when the data concentrate on lower-dimensional sets.

Regarding supervised learning analysis, the input data are assumed to be independent and identically distributed (i.i.d.). However, this assumption does not hold in many applications such as speech recognition, medical diagnosis, signal processing, computational biology, and financial prediction. Alternatively, for non-i.i.d. processes satisfying certain mixing conditions, various reconstruction results have been addressed in different contexts.
The convergence rates of several machine learning algorithms have been studied for non-i.i.d. data. 
Examples include weighted average algorithm \cite{cuong2013generalization}, least squares support vector machines (LS-SVMs) \cite{hang2014fast}, and one-vs-all multiclass plug-in classifiers \cite{dinh2015learning}. In \cite{white1984nonlinear}, the authors discussed several mixing conditions for weakly dependent observations which guarantee the consistency and asymptotic normality for the nonlinear least squares estimator. Minimum complexity regression estimators for $m$-dependent observations and strongly mixing observations were proposed in \cite{modha1996minimum} using certain Bernstein-type inequalities for dependent observations. In \cite{ryabko2006pattern}, a conditionally i.i.d. model for pattern recognition was proposed, where the inputs are conditionally independent given the output labels. In \cite{steinwart2009learning}, the authors proved that if the data-generating process satisfies a certain law of large number, the support vector machines are consistent. In \cite{hang2017bernstein}, a Berstein-type inequality for geometrically $\mathcal{C}$-mixing processes is established and applied to deduce an oracle inequality for generic regularized empirical risk minimization algorithms. Using a strong central limit theorem for chaotic data and compressed sensing results, the authors in \cite{tran2017exact} proved a reconstruction guarantee for sparse reconstruction of governing equations for three-dimensional chaotic systems with outliers. The common technique in the mentioned works is the application of either a central limit theorem or a suitable concentration inequality for the given data.

In this work, we study the problem of learning nonlinear functions from identically distributed (but not necessarily independent) data that are corrupted by outliers and/or contaminated by noise. By expressing the target function in the multivariate polynomial space, the learning problem is recast as a sparse robust linear regression problem where we incorporate both the unknown coefficients and the corruptions in a basis pursuit framework.  The main contribution of our paper is to provide a reconstruction guarantee for the associated $\ell_1$-optimization problem where the (augmented) sampling matrix is formed from the data matrix, its powers, and the identity matrix. Although the data may not be i.i.d., we prove that the sampling matrix satisfies the null space property, provided that the data are compact and satisfies a suitable concentration inequality. Consequently, the basis pursuit problem will be guaranteed to have a unique solution and be stable with respect to noise. 
 
 The paper is organized as follows. In Section 2, we explain the problem setting. In Section 3, we first recall the theory from compressive sensing, then present the theoretical reconstruction guarantees. In Section 4, we state the recovery results for various types of data including i.i.d. data, exponentially strongly $\alpha$-mixing data, geometrically $\mathcal{C}$-mixing data, and uniformly ergodic Markov chain. The numerical implementations and results are described in Sections 5. We discuss the conclusion and future works in Section 6. 

\section{Problem Statement}
\label{sec:problemStatement}

We would like to learn a function $f:\mathbb{R}^d\rightarrow \mathbb{R},$
from $m$ data points $\Big(\mathbf{u}^{(i)}= \mathbf{x}^{(i)}+ \pmb{\theta}^{(i)}, y^{(i)} =f(\mathbf{x}^{(i)}) {+\varepsilon^{(i)}}\Big)_{i=1}^m$, where $\{\mathbf{u}^{(i)}\}$ is corrupted data, $\{\mathbf{x}^{(i)}\}$ is the uncorrupted part, $\{\pmb{\theta}^{(i)}\}$ represents the corruption, and $\{\varepsilon^{(i)}\}$ denotes noise. We say that $\mathbf{u}^{(i)}$ is an outlier if the corruption $\pmb{\theta}^{(i)}$ is non-zero. 
Assume that the function of interest $f$ is a multivariate polynomial of degree at most $p$:
\[f(x_1,\cdots,x_d) = \sum\limits_{|\alpha| = \alpha_1 +\ldots + \alpha_d\leq p}c^{\alpha}x_1^{\alpha_1}\ldots x_d^{\alpha_d}.\]
Let $\mathbf{y} = [y^{(1)}, \ldots, y^{(m)}]^T$, $\pmb{\varepsilon} =  [\varepsilon^{(1)}, \ldots, \varepsilon^{(m)}]^T$, $\pmb\theta$ be the matrix where the rows are $\pmb\theta^{(i)}$, and $U$ be the data matrix,
\begin{equation*}
   U 
   =  \begin{bmatrix}
       - & \mathbf{u}^{(1)} & -      \\
  - & \mathbf{u}^{(2)} & -   \\
  &\cdots&\\
  - & \mathbf{u}^{(m)} & -    \\
   \end{bmatrix}
 = \begin{bmatrix}
           u_1^{(1)} & \dots & u_d^{(1)}\\
           u_1^{(2)} & \dots & u_d^{(2)}\\
           \vdots  & \vdots & \vdots \\
            u_1^{(m)} & \dots & u_d^{(m)}
         \end{bmatrix}= \begin{bmatrix}| &  & | \\
       U_1 &\cdots & U_d \\
       | &  & | \end{bmatrix} _{m\times d}. 
       \end{equation*}
      
 \vspace{0.1 cm}
 \noindent Then we form the dictionary matrix $\Phi= \Phi_U$ from data,
 \begin{equation}
 \begin{aligned}
 \Phi_U = \begin{bmatrix}
 |  &    |     &    |      &         &    |      &    |       &      |        &           &     |              &\\
 1 & U_1  & U_2   & \dots  &   U_d & U_1^2 & U_1U_2 & \cdots & U_d^2  &\cdots \\
 |  &    |     &     |     &          &   |      &    |       &      |        &           &     |              &\\
 \end{bmatrix}_{m \times r}, 
 \end{aligned}
 \label{eqn:dictionary}
 \end{equation}
where $r =  {p + d\choose d}$ is the maximal number of $d$-multivariate monomials of degree at most $p$.

Denote $\mathbf{c} = (c^\alpha)_{|\alpha|\leq p}$ the coefficient vector and $\mathbf{e} = \mathbf{y} - \Phi \mathbf{c} $, we can reformulate our problem as follows: \\
\centerline{Find $(\mathbf{c},\mathbf{e}) \in\mathbb{R}^r\times \mathbb{R}^m$ such that $\mathbf{y} = \Phi \mathbf{c} + \mathbf{e}$.}\\
Without corruptions and with arbitrary noise vector $\pmb{\varepsilon}$, the problem is classically solvable by least squares regression once $m \geq r$.   With corruptions, whose locations can be arbitrary but are unknown beforehand, if $m \geq r$ and at least $n = r$ of the $m$ measurements are uncorrupted, then one could in theory do a regression on each of the ${m \choose n}$ subsets of $n$ measurements and retain the set with the smallest error; however, this is an infeasible combinatorial algorithm. Thus, the convex relaxation of this combinatorial algorithm is a natural choice for reconstruction algorithm:
\begin{equation}
\min_{\mathbf{c'}, \mathbf{e'}} \|\mathbf{e}'\|_1 \quad\text{subject to}\quad  \mathbf{y} = \Phi \, \mathbf{c}' + \mathbf{e}'.
\label{eqn:main0}
\end{equation}

On the other hand, if the polynomial coefficients are sparse or the polynomial function can be approximated by a sparse polynomial, the learning problem can be recast as follows:
\begin{equation}
\min_{\mathbf{c}',\mathbf{e}'} \|\mathbf{c}'\|_1 +\|\mathbf{e}'\|_1 \quad\text{subject to}\quad  \mathbf{y} = \Phi \, \mathbf{c}' + \mathbf{e}' ,
\label{eqn:main1}
\end{equation}
or, more generally, as the corrupted sensing problem \cite{li2013compressed, foygel2014corrupted, price2018},
\begin{equation}
\min_{\mathbf{c}',\mathbf{e}'} \|\mathbf{c}'\|_1 +\lambda \|\mathbf{e}'\|_1 \quad\text{subject to}\quad \mathbf{y} = \Phi \,\mathbf{c}' +\mathbf{ e}' .
\label{eqn:corruptedCS}
\end{equation}

For the remainder of the paper, we denote the sparsity level of $\mathbf{c}$ by $s_c$, and the row-sparsity level of $\pmb\theta$ by $s_{\theta}$. In the noiseless case ($\pmb\varepsilon =0$), we have:
\[\|\mathbf{e}\|_0 =\#\{i :  \pmb{\theta}^{(i)} \not = 0 \}\leq s_{\theta} .\] 

\section{Reconstruction Guarantee Analysis}
\label{sec:analysis}
Before presenting the properties of the matrix $[Id_m, \Phi]$ and theoretical guarantees for the corresponding $\ell_1$-optimization problems, we first recall some results from compressive sensing including the null space property and the stable null space property (see \cite{foucart2013mathematical} for a comprehensive overview). 
\subsection{Theory from Compressive Sensing}

\begin{definition}
A matrix $A\in \mathbb{R}^{m\times N}$ is said to satisfy 
\begin{itemize}
\item the null space property of order $s$ if
\[\|v_S\|_1 <\dfrac{1}{2}\|v\|_1\quad\text{for all }\ v\in\ker A\backslash \{0\},\]
for any set $S\subset [N] :=\{1,2,\ldots, N\}$ with $card(S)\leq s$.
\item the stable null space property of order $s$ with constant $0<\rho<1$ if 
\[\|v_S\|_1 \leq \dfrac{\rho}{\rho +1 }\|v\|_1\quad\text{for all }\ v\in\ker A,\]
for any set $S\subset [N]$ with $card(S)\leq s$.
\end{itemize}
\end{definition}
\begin{proposition}[Recovery guarantee given null space property]
Given a matrix $A\in \mathbb{R}^{m\times N}$, every $s$-sparse vector $z^*\in\mathbb{R}^N$ with $y = Az^*$ is the unique solution of 
\begin{equation}
\min \|z\|_1,\quad\text{subject to}\quad y = Az,
\label{eqn:l1min} 
\end{equation}
if and only if $A$ satisfies the null space property of order s.
\label{prop:l1min}
\end{proposition}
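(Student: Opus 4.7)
The plan is to prove the biconditional by addressing each direction separately, both times exploiting the decomposition of a vector into its restriction to an index set $S$ and to its complement $S^c$, together with the $\ell_1$ triangle inequality.

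For the sufficiency direction, I would assume $A$ satisfies the null space property of order $s$, fix an $s$-sparse vector $z^*$ with support $S$, and take an arbitrary competitor $z \ne z^*$ satisfying $Az = Az^*$. Setting $v := z - z^* \in \ker A \setminus \{0\}$ and using that $z^*$ vanishes on $S^c$, I would decompose
\[
\|z\|_1 \;=\; \|z^*_S + v_S\|_1 + \|v_{S^c}\|_1,
\]
apply the reverse triangle inequality to the first summand to obtain $\|z\|_1 \ge \|z^*\|_1 + \|v\|_1 - 2\|v_S\|_1$, and then invoke the NSP to conclude $\|z\|_1 > \|z^*\|_1$. Hence $z^*$ is the unique $\ell_1$-minimizer.

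For the necessity direction, I would assume every $s$-sparse vector is the unique $\ell_1$-minimizer for its own measurements. Given $v \in \ker A \setminus \{0\}$ and $S \subset [N]$ with $|S| \le s$, the identity $Av_S = A(-v_{S^c})$, which follows from $Av=0$, shows that $v_S$ (which is $s$-sparse) and $-v_{S^c}$ are two feasible solutions of the common linear system $Az = Av_S$, and they are distinct because their supports lie in complementary index sets while $v \ne 0$. The uniqueness hypothesis then forces $\|v_S\|_1 < \|-v_{S^c}\|_1 = \|v\|_1 - \|v_S\|_1$, which rearranges to the NSP inequality $\|v_S\|_1 < \frac{1}{2}\|v\|_1$.

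This is a classical characterization from compressive sensing, so the main obstacle is modest. The one slightly delicate step is the construction in the necessity direction: recognizing that any nonzero null-space vector, once split at a candidate support $S$ of size at most $s$, automatically supplies two distinct solutions to a common underdetermined system, which is exactly the leverage needed to convert uniqueness of the $s$-sparse minimizer into the NSP inequality.
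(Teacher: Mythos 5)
Your proof is correct. The paper itself states this proposition without proof, citing it as a standard result from compressive sensing (Foucart--Rauhut), and your argument is exactly the classical one from that reference: the sufficiency direction via the decomposition $\|z\|_1 \ge \|z^*\|_1 + \|v\|_1 - 2\|v_S\|_1$ for $v = z - z^* \in \ker A\setminus\{0\}$, and the necessity direction by pitting $v_S$ against $-v_{S^c}$ as two distinct feasible points for the system $Az = Av_S$. Both steps are carried out correctly, including the observation that $v_S \ne -v_{S^c}$ because their supports are disjoint and $v \ne 0$.
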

\begin{proposition}[Recovery guarantee given stable null space property]
Suppose a matrix $A\in \mathbb{R}^{m\times N}$ satisfies the stable null space property of order $s$ with constant $0<\rho<1$. Then, for any $x\in\mathbb{R}^N$ with $y=Ax$, a solution $z^\#$ of the optimization problem \eqref{eqn:l1min} approximates the vector $x$ with $\ell_1$-error 
\[\|x-z^\#\|_1 \leq \dfrac{2(1+\rho)}{1-\rho} \inf\limits_{\|w\|_0\leq s} \|x-w\|_1.\]
\label{prop:l1min_stable}
\end{proposition}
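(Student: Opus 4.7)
The plan is to set $v = z^\# - x$ and show that $v$ must be small in $\ell_1$-norm, using both the optimality of $z^\#$ and the stable null space property applied to $v \in \ker A$. Since $Az^\# = y = Ax$, the difference $v$ lies in $\ker A$, so we can deploy the stable null space property on it once we identify an appropriate index set $S$.

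First I would pick $w$ to be a best $s$-term approximation of $x$ and let $S$ be its support, so $|S| \le s$ and $\|x_{S^c}\|_1 = \inf_{\|w'\|_0 \le s}\|x-w'\|_1$. Next I would combine two ingredients: (i) the feasibility inequality $\|z^\#\|_1 \le \|x\|_1$ from the fact that $x$ is feasible for the $\ell_1$-minimization and $z^\#$ is optimal, and (ii) the triangle/reverse-triangle inequalities applied to $z^\# = x + v$ on $S$ and $S^c$. Writing $\|z^\#\|_1 = \|(x+v)_S\|_1 + \|(x+v)_{S^c}\|_1 \ge \|x_S\|_1 - \|v_S\|_1 + \|v_{S^c}\|_1 - \|x_{S^c}\|_1$ and comparing with $\|x\|_1 = \|x_S\|_1 + \|x_{S^c}\|_1$ yields the key inequality
\[
\|v_{S^c}\|_1 \le \|v_S\|_1 + 2\|x_{S^c}\|_1.
\]

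Now I would invoke the stable null space property on $v$: since $v \in \ker A$ and $|S| \le s$, we have $\|v_S\|_1 \le \tfrac{\rho}{\rho+1}\|v\|_1 = \tfrac{\rho}{\rho+1}(\|v_S\|_1 + \|v_{S^c}\|_1)$, which rearranges to $\|v_S\|_1 \le \rho\,\|v_{S^c}\|_1$. Substituting back gives $\|v_{S^c}\|_1 \le \rho\|v_{S^c}\|_1 + 2\|x_{S^c}\|_1$, hence $\|v_{S^c}\|_1 \le \tfrac{2}{1-\rho}\|x_{S^c}\|_1$. Adding the $S$ and $S^c$ pieces, $\|v\|_1 = \|v_S\|_1 + \|v_{S^c}\|_1 \le (1+\rho)\|v_{S^c}\|_1 \le \tfrac{2(1+\rho)}{1-\rho}\|x_{S^c}\|_1$, and substituting the identity $\|x_{S^c}\|_1 = \inf_{\|w\|_0 \le s}\|x-w\|_1$ yields the claim.

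There is no real obstacle here; the only step that requires a bit of care is orienting the reverse triangle inequality in the right direction so that the $\|v_S\|_1$ and $\|v_{S^c}\|_1$ terms appear with the correct signs, and then choosing whether to absorb $\|v_S\|_1$ into $\|v_{S^c}\|_1$ or vice versa. The stable null space bound with the constant $\rho/(\rho+1)$ is tailored exactly so that the absorption step produces $\|v_S\|_1 \le \rho\|v_{S^c}\|_1$ with $\rho<1$, which is what drives the geometric-series-style estimate. Everything else is bookkeeping via the triangle inequality.
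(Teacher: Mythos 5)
Your proof is correct and complete; the paper does not prove this proposition itself but recalls it from Foucart--Rauhut, and your argument is precisely the standard one found there. The three ingredients --- the feasibility inequality $\|z^\#\|_1 \le \|x\|_1$, the reverse triangle inequality giving $\|v_{S^c}\|_1 \le \|v_S\|_1 + 2\|x_{S^c}\|_1$, and the stable null space property rearranged to $\|v_S\|_1 \le \rho\,\|v_{S^c}\|_1$ --- combine exactly as you describe to yield the constant $\tfrac{2(1+\rho)}{1-\rho}$.
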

The null space property for the matrix $A$, along with the existence of an $s$-sparse solution to the underdetermined system of equations, is a sufficient and necessary condition for sparse solutions of the NP hard minimization problem,
\[\min \|z\|_0,\quad\text{subject to}\quad y = Az, \]
to be exactly recovered via the $\ell_1$-minimization \eqref{eqn:l1min}. On the other hand, the stable null space property of the matrix $A$ guarantees that any solution, sparse or not, can be recovered up to the error governed by its distance to $s$-sparse vectors. 

\subsection{Theoretical Guarantees}
We will show that if the uncorrupted data $\{\mathbf{x}^{(i)}\}$ satisfy an appropriate concentration inequality and their common distribution $\mu$ is \textit{non-degenerate} (that is, if $\mu(B) = 1$ implies $B$ contains infinitely many elements), then the polynomial coefficients of the unknown function as well as the location of the outliers can be exactly recovered with high probability from the unique solution of the $\ell_1$-minimization problem \eqref{eqn:main1}, provided that the output values $y$ are exact.  When the output values $y$ contain dense noise, we show that every solution of the associated optimization problem can be approximated by a sparse solution. 

To begin with, we will show that the matrix $[Id_m, \Phi_{m\times r}]$, where $\Phi_{m\times r} = \Phi_U$ is constructed from all monomials up to degree $p$, satisfies the null space property. 

\begin{theorem}
\label{thm:nspA}
Fix $p\in\mathbb{N}$. Consider $\{\mathbf{u}^{(i)} =\mathbf{x}^{(i)}  +\pmb{\theta}^{(i)}  \}_{i=1}^m\subset \mathbb{R}^d$ where the uncorrupted data  $\{\mathbf{x}^{(i)}\}$ are $L^{\infty}$-bounded by $B_\mathcal{X}$ and identically distributed according to a non-degenerate probability distribution $\mu$, and the corruption $\{\pmb{\theta}^{(i)}\}$ is $L^\infty$-bounded by $B_\Theta$ and $s_{\theta}$-row sparse. Let $r =  {p + d\choose d}$, and let $\kappa(\zeta, m)$ be a function such that \begin{equation}
\kappa(m^{-\delta}, m) \geq 3\delta  r \log m,
\label{eqn:kappa}
\end{equation}
when $m$ is large enough and $\delta >0$ is some chosen constant. 
Assume that $\{\mathbf{x}^{(i)}\}$ satisfies the following concentration inequality:
\begin{equation}
\Pr \left ( \left | \frac{1}{m}\sum_{i=1}^m{\varphi(\mathbf{x}^{(i)})} - \mathbb{E} [\varphi(\mathbf{x})] \right | \geq \zeta \right ) \leq  e^{ - \kappa(\zeta, m)},
\label{eqn:concentration}
\end{equation}
for any $\zeta > 0$ and any bounded Borel function $\varphi$. Here $e = \exp(1)$.

\indent Then, there is a constant $D>0$ depending only on $p,d$, and $\mu$, so that when $m$ satisifies:
\begin{equation}
\begin{aligned}
m &\geq \left(\max\{3+ 3B_{\mathcal{X}}^p, 4D^{-1} \} \right) ^{1/\delta}, \\
m&> \dfrac{4+8s\left(1+(B_{\mathcal{X}} +B_{\Theta})^p\right)}{D},
\label{eqn:boundm}
\end{aligned}
\end{equation}
the matrix $A= [Id_m, \Phi_{m\times r}]$, where $\Phi = \Phi_U$ is the dictionary matrix \eqref{eqn:dictionary}, satisfies the null space property of order $s\geq s_{\theta}$ with probability at least $(1 - m^{-\delta r})$.
\end{theorem}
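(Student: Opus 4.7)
The plan is to reduce the null space property for $A = [Id_m, \Phi_U]$ to a uniform lower bound of the form $\|\Phi_U \mathbf{c}\|_1 \gtrsim m\|\mathbf{c}\|_1$, then derive that bound from the concentration hypothesis \eqref{eqn:concentration} via a covering argument over $\mathbb{R}^r$, and finally absorb the deterministic error introduced by the at most $s_\theta$ corrupted rows of $\Phi_U$. The principal obstacle is the uniform concentration step, where the net resolution, the Lipschitz constant of the empirical $\ell_1$-average, and the available exponent $\kappa(m^{-\delta},m) \geq 3\delta r\log m$ must be balanced; everything else is deterministic bookkeeping.

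First, for the reduction, every $v \in \ker A\setminus\{0\}$ has the form $v = (\mathbf{e}, \mathbf{c})$ with $\mathbf{e} = -\Phi_U\mathbf{c}$, so $\|v\|_1 = \|\Phi_U\mathbf{c}\|_1 + \|\mathbf{c}\|_1$. The entrywise bound $|u^\alpha| \leq 1+(B_\mathcal{X}+B_\Theta)^p$ for $|\alpha|\leq p$ gives $\|\mathbf{e}\|_\infty \leq (1+(B_\mathcal{X}+B_\Theta)^p)\|\mathbf{c}\|_1$, hence
\[\|v_S\|_1 \leq |S_e|\,\|\mathbf{e}\|_\infty + \|\mathbf{c}\|_1 \leq \bigl(1 + s(1+(B_\mathcal{X}+B_\Theta)^p)\bigr)\|\mathbf{c}\|_1\]
for every $S = S_e\cup S_c\subset[m+r]$ with $|S|\leq s$. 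Consequently the NSP inequality $2\|v_S\|_1 < \|v\|_1$ will follow once I establish
\[\|\Phi_U\mathbf{c}\|_1 > \bigl(1+2s(1+(B_\mathcal{X}+B_\Theta)^p)\bigr)\|\mathbf{c}\|_1\]
for every nonzero $\mathbf{c}\in\mathbb{R}^r$, a statement about $\Phi_U$ alone.

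Next I introduce the deterministic target $g(\mathbf{c}) := \mathbb{E}_{\mathbf{x}\sim\mu}|\phi(\mathbf{x})\cdot \mathbf{c}|$, where $\phi(\mathbf{x})$ denotes the row of monomial evaluations at $\mathbf{x}$. This is a positively homogeneous seminorm on $\mathbb{R}^r$, and non-degeneracy of $\mu$ combined with the linear independence of the monomials of degree at most $p$ forces $g(\mathbf{c}) = 0 \Rightarrow \mathbf{c} = 0$; hence $g$ is a norm on the finite-dimensional space $\mathbb{R}^r$, and by equivalence of norms there exists $D = D(p,d,\mu) > 0$ with $g(\mathbf{c}) \geq D\|\mathbf{c}\|_1$. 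The empirical average $f_m(\mathbf{c}) := \frac{1}{m}\sum_{i=1}^m|\phi(\mathbf{x}^{(i)})\cdot \mathbf{c}|$ and $g$ are both $(1+B_\mathcal{X}^p)$-Lipschitz in $\|\cdot\|_1$. For each fixed $\mathbf{c}$ with $\|\mathbf{c}\|_1 = 1$, assumption \eqref{eqn:concentration} applied to the bounded Borel function $\varphi(\mathbf{x}) = |\phi(\mathbf{x})\cdot\mathbf{c}|$ at level $\zeta = m^{-\delta}$ yields $|f_m(\mathbf{c}) - g(\mathbf{c})| \leq m^{-\delta}$ outside an event of probability at most $e^{-\kappa(m^{-\delta},m)} \leq m^{-3\delta r}$. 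Covering the unit $\ell_1$-sphere of $\mathbb{R}^r$ by a net of cardinality at most $\bigl(3(1+B_\mathcal{X}^p)m^{\delta}\bigr)^r$ at resolution $m^{-\delta}/(1+B_\mathcal{X}^p)$, applying the pointwise bound on the net by a union bound, and extending by Lipschitz continuity yields the uniform estimate $|f_m(\mathbf{c})-g(\mathbf{c})| \leq 3 m^{-\delta}\|\mathbf{c}\|_1$ on an event of probability at least $1 - (3(1+B_\mathcal{X}^p))^r m^{-2\delta r} \geq 1 - m^{-\delta r}$; the last step absorbs the combinatorial prefactor using the first hypothesis $m^{\delta}\geq 3+3B_\mathcal{X}^p$. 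The second part of that hypothesis, $m^{\delta}\geq 4D^{-1}$, then keeps this uniform fluctuation below $3D/4$, so $f_m(\mathbf{c})\geq (D/4)\|\mathbf{c}\|_1$ uniformly, i.e., $\|\Phi_X\mathbf{c}\|_1\geq (mD/4)\|\mathbf{c}\|_1$.

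Finally I transfer the bound from $\Phi_X$ to $\Phi_U$. The two matrices agree on the $m-s_\theta$ uncorrupted rows, and each row of either contributes at most $(1+(B_\mathcal{X}+B_\Theta)^p)\|\mathbf{c}\|_1$ to the $\ell_1$ norm of the product; the triangle inequality and $s_\theta\leq s$ then give
\[\|\Phi_U\mathbf{c}\|_1 \geq \|\Phi_X\mathbf{c}\|_1 - 2s(1+(B_\mathcal{X}+B_\Theta)^p)\|\mathbf{c}\|_1 \geq \Bigl(\frac{mD}{4} - 2s(1+(B_\mathcal{X}+B_\Theta)^p)\Bigr)\|\mathbf{c}\|_1.\]
The second stated lower bound on $m$, of the form $m > \tilde c\,(1+s(1+(B_\mathcal{X}+B_\Theta)^p))/D$, then exceeds the threshold $1+2s(1+(B_\mathcal{X}+B_\Theta)^p)$ required in the reduction step, proving the NSP of order $s$ on an event of probability at least $1 - m^{-\delta r}$.
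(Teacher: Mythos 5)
Your proposal follows the paper's proof essentially step for step: the same reduction of the NSP for $[Id_m,\Phi_U]$ to a lower bound $\|\Phi_U\mathbf{c}\|_1\gtrsim m\|\mathbf{c}\|_1$, the same positivity constant $D=\inf_{\|c\|_1=1}\mathbb{E}|\varphi^c(\mathbf{x})|$ from non-degeneracy, the same $\ell_1$-net of size $(3(1+B_{\mathcal{X}}^p)m^{\delta})^r$ with a union bound against $e^{-\kappa(m^{-\delta},m)}\le m^{-3\delta r}$, and the same $s_\theta$-sparse correction term transferring from $\Phi_X$ to $\Phi_U$. The only discrepancy is that your two-sided uniform bound $|f_m-g|\le 3m^{-\delta}$ gives $\|\Phi_X\mathbf{c}\|_1\ge(mD/4)\|\mathbf{c}\|_1$ where the paper's one-sided chain $f_m(\mathbf{c})\ge g(q_c)-2m^{-\delta}\ge D-2m^{-\delta}\ge D/2$ gives $(mD/2)\|\mathbf{c}\|_1$, so your final requirement works out to $m>\bigl(4+16s(1+(B_{\mathcal{X}}+B_\Theta)^p)\bigr)/D$ rather than the stated $\bigl(4+8s(1+(B_{\mathcal{X}}+B_\Theta)^p)\bigr)/D$ in \eqref{eqn:boundm} --- a harmless constant slack that disappears if you note $g(q_c)\ge D$ directly instead of passing back through $g(\mathbf{c})$.
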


\begin{proof}
	For each $c\in \mathcal{B} =\{v\in\mathbb{R}^r : \|v\|_1 =1\}$, define $\varphi^c:\mathbb{R}^d\rightarrow \mathbb{R}$ as follows:
\[
\varphi^c(\mathbf{x}) = \left | \sum\limits_{\alpha: |\alpha|\leq p} c^\alpha x_1^{\alpha_1}\ldots x_d^{\alpha_d} \right |.
\]
We first evaluate the lower bound for the summation $ \sum\limits_{i=1}^m{\varphi^c(\mathbf{x}^{(i)})}$.
For any non-zero $c\in\mathbb{R}^r$, we have $\mathbb{E}[\varphi^c(\mathbf{x})]~>~0$. 
Indeed, if $\mathbb{E}[\varphi^c(\mathbf{x})] =0$, then $\varphi^c(\mathbf{x}) = 0$ $\mu$-almost surely.
Since $\mu$ is non-degenerate, there are infinitely many $\mathbf{x}$ such that $\varphi^c(\mathbf{x}) = 0$.
This implies $c = 0$ which is a contradiction. Therefore, $\mathbb{E}[\varphi^c(\mathbf{x})] >0$  for any $c\in\mathcal{B}$.

On the other hand, since the set $\mathcal{B}$ is compact and nonempty, we can apply the extreme value theorem for the continuous function $\mathbb{E}[\varphi^c(\mathbf{x})]$ to get the following bound:
\[\inf_{c\in \mathcal{B}} \mathbb{E}[\varphi^c(\mathbf{x})]  \geq D > 0,\]
for some constant $D>0$. Note that $D$ depends on $\mu, d$, and $p$.

According to a well-known result on the covering number (for example, see Appendix C.2, \cite{foucart2013mathematical}), there exists a finite set of points $\mathcal{Q}$ in $\mathcal{B}$  
of cardinality \[|\mathcal{Q}|\leq   \left(3\, m^{\delta}(1+B_{\mathcal{X}}^p)\right)^r\] such that
	\begin{equation*}
\max\limits_{c\in\mathcal{B}} \min\limits_{q\in\mathcal{Q}} \|c-q\|_1\leq  \dfrac{1}{m^{\delta} (1+ B_{\mathcal{X}}^p )}.\end{equation*}

Applying the union bound on $\mathcal{Q}$ and using the assumption $\kappa(m^{-\delta},m)\geq 3\delta r\log m$, we derive:
\begin{equation*}
\Pr \Big ( \bigcup_{q \in \mathcal{Q}} \Big \{ \Big | \dfrac{1}{m}\sum_{i=1}^m{\varphi^q(\mathbf{x}^{(i)})} -  \mathbb{E} [\varphi^q(\mathbf{x})] \Big | \geq {m^{-\delta}} \Big \} \Big )  
\leq m^{\delta r}(3+3B_{\mathcal{X}}^p)^r e^{- \kappa(m^{-\delta}, m)  }\leq m^{-\delta r},
\end{equation*}
provided that 
	\begin{equation}
	m\geq \left(3+3B_{\mathcal{X}}^p\right)^{1/\delta}.\label{eqn:boundm00}
	\end{equation}
Hence,
\begin{equation*}
\Pr \left ( \max_{q \in \mathcal{Q}} \left \{ \left | \dfrac{1}{m}\sum_{i=1}^m{\varphi^q(\mathbf{x}^{(i)})} - \mathbb{E}[\varphi^q(\mathbf{x})] \right | \leq  m^{-\delta}\right \} \right ) 
\geq 1 - m^ {-\delta r}.%
\end{equation*}
Therefore, for any $q\in\mathcal{Q}$, we have:
\begin{equation}
\sum\limits_{i=1}^m  \varphi^q\left(\mathbf{x}^{(i)}\right)\geq m\mathbb{E}(\varphi^q(x)) - \left|  \sum_{i=1}^m{\varphi^q(\mathbf{x}^{(i)})} - m \mathbb{E}[\varphi^q(\mathbf{x})] \right | \geq mD - m^{1-\delta},\label{eqn:bound11}
\end{equation}
with probability at least $(1 - m^{-\delta r})$. 

For each $c\in\mathcal{B}$ there exists $q_c\in\mathcal{Q}$ so that $\|c-q_c\|_1~\leq~\dfrac{1}{m^{\delta}(1+B_{\mathcal{X}}^p)}$. Applying the H\"{o}lder's inequality for $\mathbf{x} = (x_j)\in \mathbb{R}^d$ with $\|\mathbf{x}\|_\infty\leq B_{\mathcal{X}}$, we obtain:
\[\left | \sum\limits_{\alpha: |\alpha|\leq p}(c^\alpha - q_c^\alpha) \prod_{j=1}^dx_j^{\alpha_j}\right | \leq \|c-q_c\|_1 \max\limits_{\alpha: |\alpha|\leq p}\prod_{j=1}^d\left | x_j^{\alpha_j}\right | \leq m^{-\delta}.\]
Combining with the inequality \eqref{eqn:bound11}, we obtain
\begin{equation*}
\begin{aligned}
\sum\limits_{i=1}^m \varphi^c\left(\mathbf{x}^{(i)}\right) &\geq \sum\limits_{i=1}^m  \varphi^{q_c}\left(\mathbf{x}^{(i)}\right) -\sum\limits_{i=1}^m  \left | \sum\limits_{\alpha: |\alpha|\leq p}(c^\alpha - q_c^\alpha) \prod_{j=1}^d\left(x_j^{(i)}\right)^{\alpha_j}\right |\\
&\geq m(D - 2m^{-\delta})\\
&\geq \dfrac{1}{2}mD,
\end{aligned}
\end{equation*}
with probability at least $(1 - m^{-\delta r})$,
provided that 
\begin{equation}
m\geq \left(\frac{4}{D}\right)^{1/\delta}.
\label{eqn:boundm01}
\end{equation}

By linearity, we have in the same event, 
\begin{equation} 
\sum_{i=1}^m{\varphi^c(\mathbf{x}^{(i)})} \geq \frac{1}{2}m D \, \|c\|_1, \quad \forall c \in\mathbb{R}^r\setminus\{0\}.  
\label{eqn:lowerbound00}
\end{equation}
Next, we will estimate the lower bound for $\|\Phi c\|_1$, where $c\in \mathbb{R}^r \setminus \{0\} $. Denote $R =(R_1,\ldots, R_m)^T \in \mathbb{R}^m $, where $R_i$ is defined as follows
\[R_i = (\Phi c)_i -  \sum\limits_{\alpha: |\alpha|\leq p} c^\alpha \left(x_1^{(i)}\right)^{\alpha_1}\ldots \left(x_d^{(i)}\right)^{\alpha_d}, \quad i=1,\ldots, m.\]
Applying the H\"{o}lder's inequality, we have
\begin{equation*}
\begin{aligned}
| (\Phi c)_i |&=\Big |\sum\limits_{\alpha: |\alpha|\leq p} c^\alpha \left(u_1^{(i)}\right)^{\alpha_1}\ldots \left(u_d^{(i)}\right)^{\alpha_d}\Big |\leq \sum\limits_{\alpha: |\alpha|\leq p} \left |c^\alpha   \left(u_1^{(i)}\right)^{\alpha_1}\ldots \left(u_d^{(i)}\right)^{\alpha_d}\right| \\
&\leq \|c\|_1\max\limits_{\alpha: |\alpha|\leq p}\prod\limits_{j=1}^d\left|\left(u_j^{(i)}\right)^{\alpha_j}\right|\leq \|c\|_1 \left(1+(B_{\mathcal{X}} + B_{\Theta})^p\right) .
\label{eqn:boundPhi}
\end{aligned}
\end{equation*}
Similarly, we have \[\varphi^c(\mathbf{x}^{(i)})\leq \|c\|_1 \left(1+B_{\mathcal{X}}^p\right).\] 
Therefore,
\[
\left | R_i\right | \leq   | (\Phi c)_i | + \varphi^c\big(\mathbf{x}^{(i)}\big) \leq  2\|c\|_1 (1 + (B_{\mathcal{X}}+B_\Theta)^p ).
\]
Since $\|\Theta\|_{2,0}\leq s_{\theta}$, we deduce $\|R\|_0\leq s_{\theta}$ and 
\begin{equation}
\|R\|_1 =\sum\limits_{i=1}^m |R_i| \leq 2 s_{\theta} \|c\|_1 (1 + (B_{\mathcal{X}}+B_\Theta)^p ).
\label{eqn:boundR}
\end{equation}
Thus, in the event that \eqref{eqn:lowerbound00} holds, we have combined with \eqref{eqn:boundR} that 
\begin{equation}
\begin{aligned}
\| \Phi c \|_1 & \geq \sum\limits_{i=1}^m\varphi^c\left(\mathbf{x}^{(i)}\right)- \| R \|_1\\
& \geq \|c\|_1 \left( \frac{1}{2}m D   - 2{s_{\theta}}(1 + (B_{\mathcal{X}}+B_\Theta)^p )\right)\\
&\geq \dfrac{1}{4} m D\|c\|_1,
\end{aligned}
\label{eqn:boundPhi2}
\end{equation}
provided moreover that
\begin{equation*}
m \geq \dfrac{8s_{\theta}(1 + (B_{\mathcal{X}}+B_\Theta)^p )}{D} =\widetilde{C}.
\label{bound_m2}
\end{equation*}
Now, we are ready to verify the null space property condition for $A = [Id_m,\Phi_{m\times r}]$ in the event that \eqref{eqn:lowerbound00} holds.  
Let $S\subset [m+r]$ be an arbitrary set of size $s$ and $w\in\ker A\setminus \{\vec{0}\}$. Denote $\hat{c}\in\mathbb{R}^r $ be the last $r$ entries of $w$, and 
\[S_1 =S\cap [m], \quad S_2= \left(S\cap\{m+1,\ldots, m+r\}\right) - m\subset [r].\]
Since $w\in\ker A \setminus \{\vec{0}\}$, $\hat{c}\not = \vec{0}_r$ and  $w =[-\Phi \hat{c},\hat{c}]$. Using the inequality \eqref{eqn:boundPhi}, we have
\[
\|w_S\|_1  = \|\hat{c}_{S_2}\|_1 + \|(\Phi \hat{c})_{S_1}\|_1   \leq  \|\hat{c}\|_1+  \|(\Phi \hat{c})_{S_1}\|_1  \leq  \|\hat{c} \|_1  \left( 1+ s\left(1+(B_{\mathcal{X}} +B_\Theta)^p\right) \right).\]
On the other hand, using the inequality \eqref{eqn:boundPhi2}, we obtain
\[
 \|w\|_1 = \|\hat{c}\|_1 + \|\Phi \hat{c}\|_1 \geq  \|\hat{c}\|_1\left(1+ \frac{1}{4}mD\right).
\]
Then when $m$ satisfies \eqref{eqn:boundm00}, \eqref{eqn:boundm01}, and
\begin{equation}
m> \dfrac{4+ 8s(1+(B_{\mathcal{X}}+B_\Theta)^p)}{D}>\widetilde{C},
\label{eqn:mNSP} 
\end{equation}
we have $\|w_S\|_1 < \dfrac{1}{2}\|w\|_1$, for any $w\in\ker A \setminus \{\vec{0}\}$. That completes our proof.
\end{proof}
\begin{remark}\label{rm:stableNSP} 
	\begin{itemize}
		\item Since $\|\Phi c\|_1\geq \frac{1}{4}mD\|c\|_1$ for any $c\in\mathbb{R}^r\setminus \{0\}$ with probability $1-m^{-\delta r}$, we conclude that the matrix $\Phi$ is of full column rank.
		\item From the proof, we also derive that if $s\geq r$, the matrix $[Id_m,\Phi_{m\times r}]$ satisifes the partial null space property of order $s-r$ (see \cite{bandeira2013partial}, Definition 3.1). 
		\item 	If we keep the conditions \eqref{eqn:boundm00} and \eqref{eqn:boundm01}, and change the condition \eqref{eqn:mNSP} to
	\begin{equation}
	m> \dfrac{4+ 4s(\rho+1)(1+(B_{\mathcal{X}}+B_\Theta)^p)}{\rho D},
	\label{eqn:mSNSP} 
	\end{equation}
	then $\|w_S\|_1 \leq \dfrac{\rho}{\rho+1}\|w\|_1$, for any $w\in\ker A$ and any set $S\subset [m+r]$ with $card(S)\leq s$. It means $A$ satisfies the stable null space property of order $s$. 
	\end{itemize}
	
	\end{remark}

\medskip

Combining with the reconstruction results from compressed sensing (see Proposition \ref{prop:l1min} and Proposition \ref{prop:l1min_stable}), we immediately obtain the following reconstruction guarantees.
\begin{theorem}Fix $p\in\mathbb{N}$.
Suppose we observe corrupted measurements
\[  \Big(\mathbf{u}^{(i)}= \mathbf{x}^{(i)}+ \pmb{\theta}^{(i)}, y^{(i)} =f(\mathbf{x}^{(i)})+\varepsilon^{(i)}\Big)_{i=1}^m \subset {\mathbb{R}^d}\times {\mathbb{R}},\]
where $\{\mathbf{x}^{(i)}\}$ and $\{\pmb{\theta}^{(i)}\}$ satisfy the assumptions in Theorem \ref{thm:nspA}, and $f$ is a sparse multivariate polynomial with at most $s_c$ monomial terms of degree at most $p$. Denote $\mathbf{y} = (y^{(i)})$, $s=s_c+s_{\theta}$, $\Phi_U$ be the dictionary matrix \eqref{eqn:dictionary}, and $\mathbf{c}$ be the unknown polynomial coefficients of $f$. The problem can be recast as
\[\mathbf{y} = \Phi \mathbf{c} +\mathbf{e},\]
for some $\mathbf{e}\in \mathbb{R}^m$. 
\begin{itemize}
\item[(a)] When $\pmb{\varepsilon}  =0$, then $\supp \mathbf{e} =  \{i: \pmb{\theta}^{(i)} \not = 0\}$. Suppose $\|\mathbf{c}\|_0 + \|\mathbf{e}\|_0\leq s$, then there is a constant $D>0$ depending only on $p,d$, and $\mu$, so that when $m$ satisfies \eqref{eqn:boundm}, the polynomial coefficients $\mathbf{c}$ of $f$ as well as the vector $\mathbf{e}$ can be exactly recovered with probability $(1 - m^{-\delta r})$ from the unique solution to the $\ell_1$-minimization problem: 
\[\min\limits_{\mathbf{c}',\mathbf{e}'}\|\mathbf{e}'\|_1+\|\mathbf{c}'\|_1 \quad\text{subject to}\ \Phi \mathbf{c}' +\mathbf{e}'  = \mathbf{y}.\]
\item[(b)] When $\pmb{\varepsilon} \not = 0$ and is not necessarily sparse, if $m$ satisfies \eqref{eqn:mSNSP}, \eqref{eqn:boundm00}, and \eqref{eqn:boundm01}, a solution  $(\mathbf{c}^\#, \mathbf{e}^\#)$ to the $\ell_1$-minimization \eqref{eqn:main1} approximates the true solution $(c,e)$ with $\ell_1$-error: 
\[ \|\mathbf{c} - \mathbf{c}^\#\|_1 + \|\mathbf{e} - \mathbf{e}^\# \|_1  \leq \dfrac{2(1+\rho)}{1- \rho }\left(\|\mathbf{c} -\mathbf{c}^*\|_1+ \|\mathbf{e} -\mathbf{e}^*\|_1\right),\]
where $ [\mathbf{c}^*,\mathbf{e}^*]$ is the best $s$-term approximation (vector of $s$ largest-magnitude entries)  of $[\mathbf{c},\mathbf{e}]$ and $\rho\in (0,1)$ is the stable null space constant of the matrix $[Id_m,\Phi_U]$.
\end{itemize}
\label{thm:main}
\end{theorem}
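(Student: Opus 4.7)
The plan is to treat this result as an almost immediate corollary of Theorem \ref{thm:nspA} and Remark \ref{rm:stableNSP}, fed into the two canonical compressive-sensing statements Proposition \ref{prop:l1min} and Proposition \ref{prop:l1min_stable}. The key observation is a rewriting: set $A = [Id_m,\Phi_U] \in \mathbb{R}^{m\times(m+r)}$ and $z' = [\mathbf{e}';\mathbf{c}'] \in \mathbb{R}^{m+r}$. Then the constraint $\mathbf{y} = \Phi\mathbf{c}' + \mathbf{e}'$ becomes $\mathbf{y} = A z'$, and the objective $\|\mathbf{e}'\|_1 + \|\mathbf{c}'\|_1$ becomes $\|z'\|_1$, so that \eqref{eqn:main1} is an instance of the canonical $\ell_1$-minimization \eqref{eqn:l1min} with sensing matrix $A$ and measurement vector $\mathbf{y}$.

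For part (a), with $\pmb{\varepsilon} = 0$ the true vector $z = [\mathbf{e};\mathbf{c}]$ is $s$-sparse with $s = s_c + s_{\theta}$, since $\supp(\mathbf{e}) \subseteq \{i : \pmb{\theta}^{(i)}\neq 0\}$ forces $\|\mathbf{e}\|_0 \leq s_{\theta}$, while $\|\mathbf{c}\|_0 \leq s_c$ by sparsity of $f$. Invoking Theorem \ref{thm:nspA} with sparsity parameter $s$ (the hypothesis $s \geq s_{\theta}$ is automatic) and $m$ satisfying \eqref{eqn:boundm} yields the null space property of order $s$ for $A$ with probability at least $1 - m^{-\delta r}$. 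Proposition \ref{prop:l1min} then identifies $z$ as the unique $\ell_1$-minimizer on this event, so both $\mathbf{c}$ and $\mathbf{e}$ are recovered exactly.

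For part (b), the residual $\mathbf{e} = \mathbf{y} - \Phi\mathbf{c}$ now absorbs the dense noise $\pmb{\varepsilon}$ and is no longer sparse, so we need robustness rather than exactness. By Remark \ref{rm:stableNSP}, the proof of Theorem \ref{thm:nspA} actually delivers the stable null space property of order $s$ with constant $\rho$ once $m$ satisfies \eqref{eqn:mSNSP} together with \eqref{eqn:boundm00} and \eqref{eqn:boundm01}. Applying Proposition \ref{prop:l1min_stable} to $A$ with $x = [\mathbf{e};\mathbf{c}]$ and $z^{\#} = [\mathbf{e}^{\#};\mathbf{c}^{\#}]$ yields the stated $\ell_1$-error bound, upon taking $[\mathbf{e}^*;\mathbf{c}^*]$ to be the joint best $s$-term approximation of the concatenated vector $[\mathbf{e};\mathbf{c}]$; then $\inf_{\|w\|_0 \leq s}\|[\mathbf{e};\mathbf{c}] - w\|_1 = \|\mathbf{e} - \mathbf{e}^*\|_1 + \|\mathbf{c} - \mathbf{c}^*\|_1$ by separability of the $\ell_1$ norm across the two blocks.

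There is no serious obstacle: the substance of the argument is entirely contained in Theorem \ref{thm:nspA} and Remark \ref{rm:stableNSP}, and what remains is bookkeeping, namely consistent use of the sparsity level $s = s_c + s_{\theta}$ across both applications and the elementary block-sparsity splitting in the final step.
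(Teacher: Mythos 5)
Your proposal is correct and follows exactly the route the paper intends: the paper itself gives no separate proof of this theorem, stating only that it is ``immediately'' obtained by combining Theorem \ref{thm:nspA} (and the stable null space property noted in Remark \ref{rm:stableNSP}) with Propositions \ref{prop:l1min} and \ref{prop:l1min_stable}, which is precisely the reduction via $A=[Id_m,\Phi_U]$ and $z'=[\mathbf{e}';\mathbf{c}']$ that you carry out. Your explicit bookkeeping of the sparsity level $s=s_c+s_\theta$ and the block separability of the $\ell_1$ best $s$-term approximation fills in the small details the paper leaves implicit.
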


\begin{remark} 
\begin{itemize} 
\item The partial $\ell_1$-minimization problem in \cite{tran2017exact}
\[\min_{\mathbf{c}',\mathbf{e}'} \|\mathbf{e}'\|_1 \quad\text{subject to}\quad \mathbf{y} = \Phi\mathbf{c}'+\mathbf{e}',\]
is a special case of problem \eqref{eqn:main1} with $s=s_{\theta}\geq r$. In other words, given corrupted input-output data where the corruption measurements are $s_{\theta}$-sparse, we can recover the polynomial function that fit the given data and detect the outliers correctly.

\item The same result in Theorem \ref{thm:main} can be extended immediately to learn a system of high-dimensional polynomial functions $\mathbf{f} = (f_1,\ldots, f_n)\in\mathbb{R}^n$ with the same coefficient matrix, where each $f_j$ is a multivariate polynomial of degree at most $p$:
\[\min_{\mathbf{c}'_j,\mathbf{e}'_j} \|\mathbf{c}'_j\|_1 +\|\mathbf{e}'_j\|_1\quad\text{subject to}\quad \mathbf{y}_j = \Phi \mathbf{c}'_j + \mathbf{e}'_j, \quad 1\leq j\leq n.\]
\item  By considering a slight modification of the matrix $A$, $\widetilde{A} = \left [\dfrac{1}{\lambda} Id_m, \Phi\right ]$, we can verify that $\widetilde{A}$ also satisfies the null space property, provided that $m$ is sufficiently large. Indeed, every $w\in\ker \widetilde{A}\setminus\{0\}$ can be written as $w~=~ [-\lambda \Phi \hat{\mathbf{c}}, \hat{\mathbf{c}}]$. Then with the lower bound on $\|\Phi \mathbf{c}\|_1$, we can immediately show $\|w_S\|_1< \dfrac{1}{2} \|w\|_1$, provided that 

\begin{equation}
\begin{aligned}
m &\geq \left(\max\{3+ 3B_{\mathcal{X}}^p, 4D^{-1} \} \right) ^{1/\delta}, \\
m&> \dfrac{8s\left(1+(B_{\mathcal{X}} +B_{\Theta})^p\right)}{D},\\
\lambda&> \dfrac{4}{mD - 8s\left(1+(B_{\mathcal{X}} +B_{\Theta})^p\right)} .
\end{aligned}
\end{equation}
Hence, the corrupted compressed sensing problem
\[\min\limits_{\mathbf{c}', \mathbf{e}'} \|\mathbf{c}'\|_1 +\lambda \|\mathbf{e}'\|_1, \quad\text{subject to } \mathbf{y} = \Phi \mathbf{c}' +\mathbf{e}',\]
will have a unique solution.
\end{itemize}
\end{remark}
\section{Recovery Results for Various Types of Data}
In this section, we apply our results to several popular types of dependent data. Indeed, we only need to verify that these types of data satisfy the required concentration inequality in Theorem \ref{thm:nspA}. For the sake of simplicity, we state the recovery results for the noiseless case of $\mathbf{y}$ (i.e., when $\varepsilon^{(i)} =0$).

\subsection{Independent and Identically Distributed (i.i.d.) Data}

In \cite{shorack2009empirical}, the authors provide the following Bernstein inequality for i.i.d. random variables:

\begin{lemma}
If $\{\mathbf{x}^{(i)}\}$ are i.i.d. random variables with $|\varphi(\mathbf{x}^{(1)})-\mathbb{E}(\varphi(\mathbf{x}^{(1)}))| \leq C_1\ a.s.$, then the following probability inequality holds
\begin{equation}
\Pr \left ( \bigg| \frac{1}{m}\sum_{i=1}^m{\varphi(\mathbf{x}^{(i)})} -  \mathbb{E} [\varphi(X)] \bigg| \geq \zeta \right ) \leq 2 \exp \left ( -\frac{\zeta^2 m }{C_2 + C_3\zeta} \right ),
\label{eqn:bernstein_iid}
\end{equation}
where
 \[ C_2 = 2\mathbb{E}[\varphi^2(\mathbf{x}^{(1)})] - 2(\mathbb{E}[\varphi(\mathbf{x}^{(1)})])^2,\  C_3 =  \frac{2}{3}C_1, \] 
 and $\varphi$ is any bounded Borel function.
 \label{lem:Berstein_iid}
\end{lemma}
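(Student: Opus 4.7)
The statement is the classical Bernstein inequality for bounded i.i.d. random variables, so the plan is the textbook Cramér--Chernoff argument specialised to this setting. Let $Y_i := \varphi(\mathbf{x}^{(i)}) - \mathbb{E}[\varphi(\mathbf{x}^{(i)})]$. By the hypothesis these are i.i.d., mean zero, almost surely bounded by $C_1$ in absolute value, and with common variance $\sigma^2 := \mathbb{E}[Y_1^2] = \mathbb{E}[\varphi^2(\mathbf{x}^{(1)})] - (\mathbb{E}[\varphi(\mathbf{x}^{(1)})])^2 = C_2/2$. The statement to be proved is then exactly the two-sided tail bound for $\tfrac{1}{m}\sum_i Y_i$.

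The first step is the standard exponential Markov inequality: for any $t>0$,
\[
\Pr\!\Big(\tfrac{1}{m}\sum_{i=1}^m Y_i \geq \zeta\Big)
\;\leq\; e^{-t m \zeta}\, \bigl(\mathbb{E}[e^{t Y_1}]\bigr)^m ,
\]
using independence to factor the moment generating function. The core analytic step is to upper bound $\mathbb{E}[e^{tY_1}]$ using the boundedness of $Y_1$ together with the variance. Expanding the exponential as a power series and using $|\mathbb{E}[Y_1^k]| \leq C_1^{k-2}\sigma^2$ for $k\geq 2$ (because $|Y_1|\leq C_1$), together with the elementary inequality $k! \geq 2\cdot 3^{k-2}$, gives
\[
\mathbb{E}[e^{tY_1}] \;\leq\; 1 + \sigma^2 \sum_{k\geq 2} \frac{(tC_1)^{k-2}\, t^2}{k!} \;\leq\; 1 + \frac{\sigma^2 t^2/2}{1 - tC_1/3}
\;\leq\; \exp\!\Big(\tfrac{\sigma^2 t^2/2}{1 - tC_1/3}\Big),
\]
valid for $0<t<3/C_1$ and where the last step uses $1+x\leq e^{x}$.

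Inserting this bound and optimising over $t$ is then a routine calculus exercise. The canonical minimiser is $t^\star = \zeta/(\sigma^2 + C_1 \zeta/3)$, which indeed lies in $(0,3/C_1)$, and plugging it in yields
\[
\Pr\!\Big(\tfrac{1}{m}\sum_{i=1}^m Y_i \geq \zeta\Big)
\;\leq\; \exp\!\Big(-\frac{m\zeta^2/2}{\sigma^2 + C_1 \zeta/3}\Big)
\;=\; \exp\!\Big(-\frac{m\zeta^2}{C_2 + C_3\zeta}\Big),
\]
recalling $C_2 = 2\sigma^2$ and $C_3 = 2C_1/3$. Finally, applying the same argument to $-Y_i$ gives the matching lower-tail bound, and a union bound over the two tails produces the factor $2$ in front of the exponential, yielding \eqref{eqn:bernstein_iid}.

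The only non-routine piece is the MGF bound; everything else is mechanical Chernoff/Markov. The main obstacle, if any, is choosing the right elementary inequality to tame the Taylor series of $e^{tY_1}$ so that both the variance $\sigma^2$ (giving $C_2$) and the amplitude $C_1$ (giving $C_3$) appear cleanly in the denominator of the exponent; the $k! \geq 2\cdot 3^{k-2}$ trick is what makes the $1/3$ constant in $C_3 = 2C_1/3$ appear, matching the statement exactly.
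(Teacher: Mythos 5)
Your proof is correct and complete: the Chernoff bound, the moment-generating-function estimate via $|\mathbb{E}[Y_1^k]|\leq C_1^{k-2}\sigma^2$ and $k!\geq 2\cdot 3^{k-2}$, and the optimal choice $t^\star=\zeta/(\sigma^2+C_1\zeta/3)$ all check out and reproduce the stated constants $C_2=2\sigma^2$, $C_3=\tfrac{2}{3}C_1$ exactly, with the factor $2$ coming from the two-sided union bound. The paper does not prove this lemma at all --- it simply cites Shorack and Wellner --- and your argument is precisely the classical Bernstein/Cram\'er--Chernoff derivation that underlies that reference, so there is nothing to reconcile.
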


In this case, the function $\kappa$ in the concentration inequality \eqref{eqn:concentration} is 
\[\kappa(\zeta,m) = \dfrac{\zeta^2m}{C_2+C_3\zeta} -\log 2,\]
and satisfies the condition \eqref{eqn:kappa} for any constant $\delta\in \Big(0,\dfrac{1}{2}\Big)$, when $m$ is large enough. Indeed, the condition on $\kappa$ can be re-written as
\begin{equation}
r\leq \dfrac{1}{3\delta\log m}\left( \dfrac{m}{C_2m^{2\delta} + C_3m^{\delta}}-\log 2\right).\label{eqn:boundr_iid}\end{equation}
If the maximal polynomial degree $p$ is fixed, the smaller $\delta$ is, the smaller $m$ is needed to satisfy the inequality \eqref{eqn:boundr_iid}.

As a result, we have the following recovery result for i.i.d data.
\begin{theorem}
Fix $p\in\mathbb{N}$. Suppose we observe corrupted measurements
\[  \Big(\mathbf{u}^{(i)}= \mathbf{x}^{(i)}+ \pmb{\theta}^{(i)}, y^{(i)} =f\big(\mathbf{x}^{(i)}\big)\Big)_{i=1}^m \subset {\mathbb{R}^d}\times {\mathbb{R}},\]
where the uncorrupted data $\{\mathbf{x}^{(i)}\}$ are i.i.d. according to a non-degenerate distribution $\mu$ and $L^\infty$-bounded by $B_{\mathcal{X}}$; the corruption $\{\pmb{\theta}^{(i)}\}$ is $L^\infty$-bounded by $B_{\Theta}$ and $s_{\theta}$-row sparse; and $f$ is a sparse multivariate polynomial with at most $s_c$ monomials of degree at most $p$. Then, when $m$ satisfies \eqref{eqn:boundm} and \eqref{eqn:boundr_iid}, the polynomial coefficients of the function $f$ can be exactly recovered and the outliers can be successfully detected from the unique solution of \eqref{eqn:main1} with high probability.
\end{theorem}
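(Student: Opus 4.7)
The plan is to reduce this theorem to a direct application of Theorem \ref{thm:nspA} together with part (a) of Theorem \ref{thm:main}, since the only ingredient specific to the i.i.d.\ setting is the concentration inequality of the form \eqref{eqn:concentration}. I would first note that Lemma \ref{lem:Berstein_iid} supplies exactly such an inequality for any bounded Borel function $\varphi$, with
\[
\kappa(\zeta,m) \;=\; \frac{\zeta^{2} m}{C_{2}+C_{3}\zeta} - \log 2.
\]
Since the uncorrupted data $\{\mathbf{x}^{(i)}\}$ are $L^{\infty}$-bounded by $B_{\mathcal{X}}$, the monomial functions $\varphi^{c}$ appearing in the proof of Theorem \ref{thm:nspA} are uniformly bounded on $\supp\mu$, so the hypothesis $|\varphi - \mathbb{E}\varphi|\le C_{1}$ a.s.\ of Lemma \ref{lem:Berstein_iid} holds with constants $C_{1},C_{2},C_{3}$ depending only on $p,d,B_{\mathcal{X}}$.

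Next I would verify the growth condition \eqref{eqn:kappa}. Substituting $\zeta = m^{-\delta}$ yields
\[
\kappa(m^{-\delta},m) \;=\; \frac{m^{\,1-2\delta}}{C_{2}+C_{3}m^{-\delta}} \;-\;\log 2,
\]
which for any fixed $\delta \in (0,1/2)$ is of order $m^{1-2\delta}$ as $m\to\infty$, hence dominates $3\delta r \log m$ once $m$ is sufficiently large; this is precisely the quantitative bound \eqref{eqn:boundr_iid} that is stated in the hypothesis of the theorem. Thus all structural hypotheses of Theorem \ref{thm:nspA} are in force for the matrix $A=[Id_{m},\Phi_{U}]$, and the assumption \eqref{eqn:boundm} of the current theorem is exactly the sample-complexity bound required there.

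Applying Theorem \ref{thm:nspA} with $s=s_{c}+s_{\theta}\ge s_{\theta}$ gives the null space property of order $s$ for $A=[Id_{m},\Phi_{U}]$ with probability at least $1-m^{-\delta r}$. Since $f$ is a polynomial with at most $s_{c}$ active monomials and the corruption vector $\mathbf{e}$ satisfies $\|\mathbf{e}\|_{0}\le s_{\theta}$ in the noiseless regime $\pmb{\varepsilon}=0$, the concatenated vector $[\mathbf{c},\mathbf{e}]$ is $s$-sparse and satisfies $A[\mathbf{e};\mathbf{c}]=\mathbf{y}$. Proposition \ref{prop:l1min}, invoked as in Theorem \ref{thm:main}(a), then certifies that $[\mathbf{c},\mathbf{e}]$ is the unique minimizer of \eqref{eqn:main1}, so both the sparse polynomial coefficients and the outlier locations are recovered exactly on this high-probability event.

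The only part that requires any nontrivial work is the asymptotic verification that the Bernstein rate $\kappa(m^{-\delta},m)\sim m^{1-2\delta}$ eventually exceeds $3\delta r \log m$; everything else is a direct quotation of Theorem \ref{thm:nspA} and Theorem \ref{thm:main}(a). I expect no serious obstacle, since the i.i.d.\ Bernstein bound is stronger than what Theorem \ref{thm:nspA} demands, so the main content of the proof is really bookkeeping of constants and confirming the admissible range $\delta \in (0,1/2)$.
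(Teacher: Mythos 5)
Your proposal is correct and follows essentially the same route as the paper: the paper's (implicit) proof is precisely to read off $\kappa(\zeta,m)=\zeta^{2}m/(C_{2}+C_{3}\zeta)-\log 2$ from the i.i.d.\ Bernstein inequality of Lemma \ref{lem:Berstein_iid}, observe that condition \eqref{eqn:kappa} then reduces to \eqref{eqn:boundr_iid} for any $\delta\in(0,1/2)$, and invoke Theorem \ref{thm:nspA} together with Theorem \ref{thm:main}(a). Your additional remarks on the uniform boundedness of the $\varphi^{c}$ and on taking $s=s_{c}+s_{\theta}$ match the paper's bookkeeping.
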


\subsection{Exponentially Strongly $\alpha$-mixing Data}
We first recall the definition of $\alpha$-mixing coefficients and a concentration inequality for $\alpha$-mixing. 
For a stationary stochastic process $\{\mathbf{x}_t\}$, define (see \cite{rosenblatt1956central, modha1996minimum})
\[
\alpha (s) =\sup_{\substack{-\infty<t<\infty \\  A\in\sigma(\mathbf{x}^-_{t}), B\in\sigma (\mathbf{x}^+_{t+s})}} |\Pr(A\cap B) - \Pr(A)\Pr(B)|.
\]
The stochastic process is said to be {\it exponentially strongly $\alpha$-mixing} if 
\[\alpha(s) \leq \overline{\alpha}\exp(-c_{\alpha} s^\beta),\quad s\geq 1,\]
for some $\overline{\alpha}>0,\beta>0,$ and $c_\alpha>0$, where the constants $\beta$ and $c_\alpha$ are assumed to be known.
Note that strong mixing implies asymptotic independence over sufficiently large time. 

In  \cite{modha1996minimum}, the authors proved the following concentration inequality for exponentially strongly $\alpha$-mixing:

\begin{lemma}
If $\{\mathbf{x}^{(i)}\}$ are stationary exponentially strongly $\alpha$-mixing with \\
$|\varphi(\mathbf{x}^{(1)}) - \mathbb{E}(\varphi(\mathbf{x}^{(1)}))| \leq C_0 \, \text{a.s}.$, then the following probability inequality holds for $m$ sufficiently large:
\begin{equation}
\Pr \left ( \bigg | \frac{1}{m}\sum_{i=1}^m{\varphi(\mathbf{x}^{(i)})} - \mathbb{E} [\varphi(\mathbf{x}^{(1)})] \bigg | \geq \zeta \right ) \leq C_1 \exp \left ( -\frac{\zeta^2 m_\alpha }{C_2 + C_3 \zeta} \right ),
\label{eqn:bernstein_alpha}
\end{equation}
where
 \[m_\alpha := \left \lfloor \frac{m}{\lceil (8m/c_\alpha)^{1/(\beta + 1)} \rceil} \right \rfloor =C_\alpha\, m^{\beta/(\beta+1)},\]
 \[ C_1 = 2(1+4e^{-2}\overline{\alpha}), \ C_2 = 2\mathbb{E}(\varphi^2(\mathbf{x}^{(1)})) - 2(\mathbb{E}(\varphi(\mathbf{x}^{(1)})))^2,\  C_3 =  \frac{2}{3}C_0, \] and $\varphi$ is any bounded Borel function.
\end{lemma}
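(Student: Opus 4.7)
The natural approach for extending Bernstein's inequality to an $\alpha$-mixing sequence is the block (or coupling) technique: decompose the sum into a fixed number of sub-sums whose summands are well-separated in time, replace each sub-sum by an i.i.d.\ analog using $\alpha$-mixing, and then apply the classical Bernstein bound of Lemma~\ref{lem:Berstein_iid}. Concretely, I would set $r := \lceil (8m/c_\alpha)^{1/(\beta+1)} \rceil$ and $m_\alpha := \lfloor m/r\rfloor$, and partition $\{1,\dots,m\}$ into $r$ decimated subsequences $\mathcal{I}_j = \{j,\ j+r,\ j+2r,\dots\}$ for $j=1,\dots,r$. Each subsequence contains roughly $m_\alpha$ samples that are spaced at least $r$ apart in time, so pairs within $\mathcal{I}_j$ inherit $\alpha(r)$-mixing. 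Writing $S_m := \sum_{i=1}^m (\varphi(\mathbf{x}^{(i)})-\mathbb{E}\varphi(\mathbf{x}^{(1)})) = \sum_{j=1}^r S_m^{(j)}$, the event $|S_m|\geq m\zeta$ forces $|S_m^{(j)}|\geq (m/r)\zeta$ for some $j$, so a union bound over $j$ reduces the task to controlling a single sub-sum.

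For a fixed $j$ I would invoke Bradley's coupling theorem for strongly $\alpha$-mixing sequences (equivalently, Berbee's lemma after passing to the $\beta$-mixing version of the decimated chain) to construct, on an enlarged probability space, independent copies $\{\tilde{\mathbf{x}}^{(i)}\}_{i\in\mathcal{I}_j}$ with the same marginal as $\mathbf{x}^{(1)}$ and with
\[
\Pr\bigl(\tilde{\mathbf{x}}^{(i)} \neq \mathbf{x}^{(i)} \text{ for some } i\in\mathcal{I}_j\bigr) \;\leq\; C'\, m_\alpha\, \alpha(r),
\]
for a universal constant $C'$. On the coupling-success event $S_m^{(j)}$ agrees with a sum of $m_\alpha$ i.i.d., centred, bounded random variables of variance at most $C_2/2$ and range $C_0$, to which the classical Bernstein inequality of Lemma~\ref{lem:Berstein_iid} yields
\[
\Pr\!\left(|S_m^{(j)}|\geq (m/r)\zeta\right) \;\leq\; 2\exp\!\Big(-\tfrac{\zeta^2 m_\alpha}{C_2+C_3\zeta}\Big),
\]
after using $m/r\approx m_\alpha$ in the exponent.

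Aggregating over $j=1,\dots,r$ via a union bound and adding the coupling cost yields a deviation bound of the form $2r\exp(-\zeta^2 m_\alpha/(C_2+C_3\zeta)) + C'm\overline{\alpha}\exp(-c_\alpha r^\beta)$. The defining relation $c_\alpha r^{\beta+1}\approx 8m$ makes the second, coupling term equal to $C'm\overline{\alpha}\exp(-8m/r)$, which decays stretched-exponentially in $m^{\beta/(\beta+1)}$ and can be absorbed into a constant multiple of the Bernstein exponential; this absorption is precisely the source of the explicit prefactor $C_1 = 2(1+4e^{-2}\overline{\alpha})$, while $C_2 = 2\mathrm{Var}(\varphi(\mathbf{x}^{(1)}))$ and $C_3=\tfrac{2}{3}C_0$ are inherited directly from the i.i.d.\ Bernstein lemma. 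The main technical obstacle is the coupling step itself: $\alpha$-mixing is the weakest of the standard mixing notions, and a quantitative independent coupling with a total-variation bound of order $\alpha(r)$ per pair is nontrivial — Bradley's theorem imposes an additional tail/moment condition on $\varphi(\mathbf{x}^{(i)})$ (available here by boundedness) and producing the precise constant $4e^{-2}\overline{\alpha}$ requires a careful block-by-block accounting of the coupling error, including the optimization used to pin down the constant $8$ in the definition of $r$. Once the coupling estimate is in place, the remaining union-bound and Bernstein-algebra steps are routine.
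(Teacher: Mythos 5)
First, a point of reference: the paper does not prove this lemma at all --- it is quoted verbatim from Modha and Masry \cite{modha1996minimum}, so there is no internal proof to compare against. Your sketch is in the spirit of the standard blocking arguments, but as written it has two genuine gaps. The decisive one is the coupling step. Berbee's lemma, which gives an independent copy agreeing with the original except on an event of probability equal to the mixing coefficient, requires $\beta$-mixing (absolute regularity); under mere $\alpha$-mixing no such total-variation coupling exists. Bradley's coupling theorem, the correct tool for $\alpha$-mixing, only produces a copy $\tilde{X}$ with $\Pr(|X-\tilde{X}|>\xi)$ bounded by a fractional power of $\alpha$ times a moment ratio depending on $\xi$; it does not give a per-pair bound of order $\alpha(r)$, and it certainly does not give your claimed bound $C' m_\alpha \alpha(r)$ for coupling an entire decimated subsequence of length $m_\alpha$ to an i.i.d.\ copy. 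That estimate is exactly what would need to be proved, and it is not available from $\alpha$-mixing alone. The second gap is the prefactor: your union bound is over all $r=\lceil(8m/c_\alpha)^{1/(\beta+1)}\rceil$ decimated subsequences, so you obtain $2r\exp(-\zeta^2 m_\alpha/(C_2+C_3\zeta))$ with $r\to\infty$; a prefactor growing polynomially in $m$ cannot be ``absorbed'' into the fixed constant $C_1=2(1+4e^{-2}\overline{\alpha})$ without degrading the exponent, so the lemma as stated does not follow.

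The argument that actually yields these constants (in \cite{modha1996minimum}, following the Bosq--Carbon line) avoids coupling entirely. One partitions $\{1,\dots,m\}$ into roughly $2\mu_m$ \emph{consecutive} blocks of length $a_m$, splits the sum into the odd-indexed and even-indexed block families (a union bound over only two events, whence the leading factor $2$), and controls the moment generating function of each family by iteratively peeling off one block at a time using the covariance inequality for bounded functions of strongly mixing variables, $|\mathbb{E}[XY]-\mathbb{E}[X]\,\mathbb{E}[Y]|\leq 4\|X\|_\infty\|Y\|_\infty\,\alpha(s)$. This is where the factor $4\overline{\alpha}$ in $C_1$ originates, the accumulated factorization error being of order $\mu_m\alpha(a_m)$, which the choice $a_m=\lceil(8m/c_\alpha)^{1/(\beta+1)}\rceil$ renders summable into the constant $4e^{-2}\overline{\alpha}$; the Chernoff--Bernstein algebra applied to the product of block MGFs then produces $C_2$ and $C_3$ exactly as in the i.i.d.\ case, with the effective sample size $m_\alpha$. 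If you want to salvage your route, you would either need to assume $\beta$-mixing (to use Berbee) or replace the coupling by this MGF factorization; as it stands the central step of your proposal would fail.
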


Hence the concentration inequality \eqref{eqn:concentration} is satisfied with 
\[\kappa(\zeta,m) =\dfrac{\zeta^2 m_\alpha }{C_2 + C_3 \zeta}  -\log C_1.\]
Since
\begin{equation}
\kappa(m^{-\delta},m) = \dfrac{m_{\alpha}}{C_2m^{2\delta} + C_3 m^\delta} -\log C_1\geq 3\delta r\log m,
\label{eqn:boundr_alpha}
\end{equation}
for any $\delta\in \bigg(0,\dfrac{\beta}{2(\beta+1)}\bigg)$ when $m$ is lare enough, we have the recovery result for exponentially strongly $\alpha$-mixing data.

\begin{theorem}
Fix $p\in\mathbb{N}$. Suppose we observe corrupted measurements
\[  \Big(\mathbf{u}^{(i)}= \mathbf{x}^{(i)}+ \pmb{\theta}^{(i)}, y^{(i)} =f(\mathbf{x}^{(i)})\Big)_{i=1}^m \subset {\mathbb{R}^d}\times {\mathbb{R}},\]
where the uncorrupted data $\{\mathbf{x}^{(i)}\}$ are stationary exponentially strongly $\alpha$-mixing and $L^\infty$-bounded by $B_{\mathcal{X}}$; the corruption $\{\pmb{\theta}^{(i)}\}$ is $L^\infty$-bounded by $B_{\Theta}$ and $s_{\theta}$-row sparse; and $f$ is a sparse multivariate polynomial with at most $s_c$ monomials of degree at most $p$. 
If the stationary distribution $\mu$ of $\{\mathbf{x}^{(i)}\}$ is non-degenerate, then when $m$ is sufficiently large and satisfies Equation \eqref{eqn:boundm} and Equation \eqref{eqn:boundr_alpha}, the polynomial coefficients of the function $f$ can be exactly recovered and the outliers can be successfully detected from the unique solution of \eqref{eqn:main1} with high probability.
\end{theorem}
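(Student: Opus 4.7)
The plan is to reduce the theorem to a direct application of Theorem~\ref{thm:nspA} together with Theorem~\ref{thm:main}(a), by verifying that stationary exponentially strongly $\alpha$-mixing data satisfies the concentration hypothesis \eqref{eqn:concentration} with a rate function $\kappa$ that fulfills \eqref{eqn:kappa}. Concretely, I would first invoke the Bernstein-type inequality \eqref{eqn:bernstein_alpha} for exponentially strongly $\alpha$-mixing processes recalled just above the theorem. Setting
\[
\kappa(\zeta,m) \;=\; \frac{\zeta^2\,m_\alpha}{C_2 + C_3\,\zeta} - \log C_1,
\]
with $m_\alpha = C_\alpha\,m^{\beta/(\beta+1)}$, immediately gives the form \eqref{eqn:concentration} required by Theorem~\ref{thm:nspA}. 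Note that one must apply this inequality to each of the bounded Borel test functions $\varphi^c$ (for $c$ in the $\ell_1$-unit sphere), which are uniformly bounded by $1+B_{\mathcal{X}}^p$, so that the constants $C_0,C_1,C_2,C_3$ can be chosen independently of $c\in\mathcal{B}$.

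The next step is to check the growth condition \eqref{eqn:kappa}, namely $\kappa(m^{-\delta},m)\geq 3\delta r\log m$ for large $m$. Substituting $\zeta = m^{-\delta}$,
\[
\kappa(m^{-\delta},m) \;=\; \frac{m_\alpha}{C_2\,m^{2\delta} + C_3\,m^{\delta}} - \log C_1 \;\sim\; \frac{C_\alpha}{C_2}\,m^{\beta/(\beta+1) - 2\delta}.
\]
Hence as long as $\beta/(\beta+1) - 2\delta > 0$, i.e.\ $\delta \in \bigl(0,\beta/(2(\beta+1))\bigr)$, the left-hand side grows polynomially in $m$ and eventually dominates $3\delta r\log m$; this is precisely the content of the inequality \eqref{eqn:boundr_alpha} assumed in the theorem. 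Pick any such $\delta$; then for $m$ sufficiently large, \eqref{eqn:kappa} holds.

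With the concentration hypothesis of Theorem~\ref{thm:nspA} verified, and with the stationary distribution $\mu$ assumed non-degenerate so that the crucial lower bound $\inf_{c\in\mathcal{B}}\mathbb{E}[\varphi^c(\mathbf{x})]\geq D>0$ remains valid, Theorem~\ref{thm:nspA} applies: the matrix $A=[Id_m,\Phi_{m\times r}]$ satisfies the null space property of order $s = s_c + s_\theta$ with probability at least $1-m^{-\delta r}$, provided the sample-size lower bound \eqref{eqn:boundm} is satisfied. Finally, apply Theorem~\ref{thm:main}(a) to the noiseless problem $\mathbf{y}=\Phi\mathbf{c}+\mathbf{e}$ with $\|\mathbf{c}\|_0+\|\mathbf{e}\|_0\leq s$: the null space property of order $s$ combined with Proposition~\ref{prop:l1min} implies that $(\mathbf{c},\mathbf{e})$ is the unique minimizer of \eqref{eqn:main1}, yielding exact polynomial-coefficient recovery and exact identification of the outlier support.

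The only genuine work lies in the second step, namely matching the $\alpha$-mixing Bernstein bound to the abstract template required by \eqref{eqn:kappa}; everything else is bookkeeping via the earlier theorems. The main obstacle to watch is that the effective sample size $m_\alpha$ scales as $m^{\beta/(\beta+1)}$ rather than $m$, which shrinks the admissible range of $\delta$ from $(0,1/2)$ in the i.i.d.\ case to $(0,\beta/(2(\beta+1)))$, and correspondingly forces a larger minimal $m$ for the log-factor $3\delta r\log m$ to be absorbed. Once $\delta$ is chosen in this smaller interval, the quantitative thresholds from Theorem~\ref{thm:nspA} transfer verbatim.
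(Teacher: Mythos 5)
Your proposal is correct and follows essentially the same route as the paper: verify that the Bernstein-type inequality \eqref{eqn:bernstein_alpha} for exponentially strongly $\alpha$-mixing data yields the required rate function $\kappa$ satisfying \eqref{eqn:kappa} for $\delta\in\bigl(0,\beta/(2(\beta+1))\bigr)$, then apply Theorem~\ref{thm:nspA} and Theorem~\ref{thm:main}(a). Your observation that the constants $C_0,\dots,C_3$ must be taken uniformly over the test functions $\varphi^c$, $c\in\mathcal{B}$ (which is possible since these are uniformly bounded by $1+B_{\mathcal{X}}^p$), is a detail the paper leaves implicit.
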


\subsection{Geometrically (time-reversed) $\mathcal{C}$-mixing Data}
The $\mathcal{C}$-mixing processes were introduced in \cite{ maume2006exponential} to exhibit many common dynamical systems that are not necessary $\alpha$-mixing such as  Lasota-Yorke maps, uni-modal maps, piecewise expanding maps in higher dimension. Moreover, the geometrically $\mathcal{C}$-mixing processes are strongly related to some well-known results on the decay of correlations for dynamical systems (see \cite{hang2017bernstein}).

Let $\{\mathbf{x}^{(i)}\}$ be an $\mathcal{X}$-valued stationary process on $(\Omega, \mathcal{A}, \mu)$.
For a semi-norm $\|\cdot\|$ on a vector space of bounded measurable functions that satisfies $\| e^h \| \leq \|e^h\|_\infty \| h \|$, we define the $\mathcal{C}$-norm by $\| h \|_\mathcal{C} = \| h \|_\infty + \| h\|$.

Let $\mathcal{A}^i_1$ and $\mathcal{A}^\infty_{i+m}$ be the $\sigma$-algebras generated by $(\mathbf{x}^{(1)}, \ldots, \mathbf{x}^{(i)})$ and $(\mathbf{x}^{(i+m)}, \mathbf{x}^{(i+m+1)}, \ldots)$ respectively.
Then, the $\mathcal{C}$-mixing coefficient is
\begin{multline*}
\phi_{\mathcal{C}}(m) = \sup \Big\{ \Big | \mathbb{E}[Z h(\mathbf{x}^{(i+m)})] - \mathbb{E}(Z) \mathbb{E}[h(\mathbf{x}^{(i+m)})] \Big | : i \geq 1, \\
Z ~~ \text{is $\mathcal{A}^i_1$-measureable and} ~~ \|Z\|_1 \leq 1, \| h \|_{\mathcal{C}} \leq 1\Big\},
\end{multline*}
and the time-reversed $\mathcal{C}$-mixing coefficient is
\begin{multline*}
\phi_{\mathcal{C}, \text{rev}}(m) = \sup \Big\{ \Big | \mathbb{E}[Z h(X^{(i)})] - \mathbb{E}(Z) \mathbb{E}[h(X^{(i)})] \Big | : i \geq 1, \\
Z ~~ \text{is $\mathcal{A}^\infty_{i+m}$-measureable and} ~~ \|Z\|_1 \leq 1, \| h \|_{\mathcal{C}} \leq 1\Big\}.
\end{multline*}

A sequence of random variables $\{\mathbf{x}^{(i)}\}$ is called geometrically (time-reversed) $\mathcal{C}$-mixing if
\[
\phi_{\mathcal{C}, (\text{rev})}(m) \leq c \exp(-b m^\beta), \quad m \geq 1,
\]
for some constants $b>0, c \geq 0$, and $\beta > 0$.
The following concentration inequality for stationary geometrically (time-reversed) $\mathcal{C}$-mixing process is a direct consequence of the Bernstein inequality presented in \cite{hang2017bernstein}.

\begin{lemma}
Let $\{\mathbf{x}^{(i)}\}_{i\geq 1}$ be a stationary geometrically (time reversed) $\mathcal{C}$-mixing process. Consider a function $\varphi : \mathcal{X} \to \mathbb{R}$ such that $\| \varphi \| \leq A$, $\| \varphi \|_\infty \leq B$, and $\text{Var}(\varphi(\mathbf{x}^{(1)})) \leq \sigma^2$.
Then, for sufficient large $m$ we have 
\begin{equation}
\Pr \left ( \bigg | \frac{1}{m}\sum_{i=1}^m{\varphi(\mathbf{x}^{(i)})} - \mathbb{E} [\varphi(\mathbf{x}^{(1)})] \bigg | \geq \zeta \right ) \leq 4\exp \left ( -\frac{m\zeta^2 }{ 8 (\log m)^{2/\beta} \left(\sigma^2 +  \zeta B/3\right)} \right ).
\label{eqn:bernstein_C}
\end{equation}
\end{lemma}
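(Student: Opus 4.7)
The plan is to invoke the Bernstein-type concentration inequality for geometrically (time-reversed) $\mathcal{C}$-mixing processes established by Hang and Steinwart (2017) and specialize it to the present hypotheses. Nothing novel is required beyond matching up constants and verifying the assumptions.

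First, I would recall the statement of the Bernstein inequality in \cite{hang2017bernstein}. For a stationary geometrically $\mathcal{C}$-mixing process with parameters $b,c,\beta$ as in the definition preceding the lemma, and for any measurable $\varphi$ with $\|\varphi\|_{\mathcal{C}} \leq K$, $\|\varphi\|_\infty \leq B$, and $\text{Var}(\varphi(\mathbf{x}^{(1)})) \leq \sigma^2$, the result says that for all sufficiently large $m$,
\[
\Pr\left(\left|\frac{1}{m}\sum_{i=1}^m \varphi(\mathbf{x}^{(i)}) - \mathbb{E}[\varphi(\mathbf{x}^{(1)})]\right| \geq \zeta\right) \leq C_1 \exp\left(-\frac{m\zeta^2}{C_2 (\log m)^{2/\beta}\left(\sigma^2 + B\zeta/3\right)}\right),
\]
where $C_1, C_2$ depend only on the mixing parameters and on $K$. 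The factor $(\log m)^{2/\beta}$ is the characteristic penalty paid for dependence under the geometric mixing rate $\phi_{\mathcal{C}, (\text{rev})}(m) \leq c\exp(-bm^\beta)$; it arises in the Hang--Steinwart proof from the optimization of the block decomposition used to compare the $\mathcal{C}$-mixing average against an independent one.

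Second, I would verify that our hypotheses fit this template. By the defining property $\|\varphi\|_{\mathcal{C}} = \|\varphi\|_\infty + \|\varphi\|$, the assumptions $\|\varphi\| \leq A$ and $\|\varphi\|_\infty \leq B$ immediately give $\|\varphi\|_{\mathcal{C}} \leq A + B$, so we may set $K = A+B$ in the cited inequality. The variance bound $\text{Var}(\varphi(\mathbf{x}^{(1)})) \leq \sigma^2$ and the boundedness $\|\varphi\|_\infty \leq B$ supply the remaining ingredients. Substituting $K = A+B$ into the Hang--Steinwart bound produces an inequality of exactly the displayed shape.

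Third, I would absorb the constants. The authors of \cite{hang2017bernstein} work with generic $C_1, C_2$ that depend on $b, c, K, \beta$; a careful tracking (or a slight enlargement of the right-hand side) yields the universal constants $C_1 = 4$ and $C_2 = 8$ claimed in the lemma, with the $B\zeta/3$ term carried through unchanged from the subgaussian/subexponential split of the Bernstein estimate. This absorption is valid for all $m$ sufficiently large (depending on $A, B, b, c, \beta$), which is precisely the qualifier included in the statement.

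The main obstacle, such as it is, is a bookkeeping one: one has to match the constants $4$ and $8$ against the more general constants $C_1(b,c,K,\beta)$, $C_2(b,c,K,\beta)$ in the Hang--Steinwart paper, either by checking that their proof gives these precise values or by noting that for $m$ large enough any $C_1 \leq 4$ and $C_2 \leq 8$ suffice after a benign rescaling. There is no conceptual difficulty beyond this, since the hard analytic work---the blocking argument, Bernstein moment bounds, and exploitation of the $\mathcal{C}$-mixing coefficients---is already encapsulated in the cited theorem.
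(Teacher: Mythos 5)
Your proposal is correct and matches the paper's treatment exactly: the paper offers no independent proof, stating only that the lemma is ``a direct consequence of the Bernstein inequality presented in \cite{hang2017bernstein},'' which is precisely the reduction you carry out. One small simplification to your third step: no absorption or rescaling of constants is needed, since Hang--Steinwart's one-sided bound already carries the prefactor $2$ and the denominator constant $8(\log m)^{2/\beta}(\sigma^2 + \zeta B/3)$ verbatim, and the stated prefactor $4$ is just the union bound over the two tails.
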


In this case, the concentration inequality \eqref{eqn:concentration} holds for 
\[\kappa(\zeta,m) = \dfrac{m\zeta^2 }{ 8 (\log m)^{2/\beta} \left(\sigma^2 +  \zeta B/3\right)} -\log 4,\] 
and satisfies the condition \eqref{eqn:kappa} for any $\delta \in \Big(0,\dfrac{1}{2}\Big)$ when $m$ is large enough. Hence, we have the recovery result for geometrically (time-reversed) $\mathcal{C}$-mixing data.

\begin{theorem}
Fix $p\in\mathbb{N}$. Suppose we observe corrupted measurements
\[  \Big(\mathbf{u}^{(i)}= \mathbf{x}^{(i)}+ \pmb{\theta}^{(i)}, y^{(i)} =f(\mathbf{x}^{(i)})\Big)_{i=1}^m \subset {\mathbb{R}^d}\times {\mathbb{R}},\]
where the uncorrupted data $\{\mathbf{x}^{(i)}\}$ are stationary geometrically (time-reversed) $\mathcal{C}$-mixing with respect to the semi-norm $\| h \| = \sup\limits_{X \in \mathcal{X}} \| \nabla h(X) \|_1$ and $L^\infty$-bounded by $B_{\mathcal{X}}$; the corruption $\{\pmb{\theta}^{(i)}\}$ is $L^\infty$-bounded by $B_{\Theta}$ and $s_{\theta}$-row sparse; and $f$ is a sparse multivariate polynomial with at most $s_c$ monomials of degree at most $p$. 
If the stationary distribution $\mu$ of $\{\mathbf{x}^{(i)}\}$ is non-degenerate, then when $m$ is sufficiently large, the polynomial coefficients of the function $f$ can be exactly recovered and the outliers can be successfully detected from the unique solution of \eqref{eqn:main1} with high probability.
\end{theorem}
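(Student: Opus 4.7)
The plan is to apply Theorem \ref{thm:nspA} together with Theorem \ref{thm:main}(a) to the $\mathcal{C}$-mixing setting, exactly as was done for the i.i.d.\ and $\alpha$-mixing cases. All that is needed is to verify that the Bernstein-type concentration inequality \eqref{eqn:bernstein_C} can be rewritten in the form \eqref{eqn:concentration} with a rate function $\kappa$ that satisfies the required growth condition \eqref{eqn:kappa}, and then invoke the generic recovery guarantee.

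First I would fix an arbitrary $c$ in $\mathcal{B}=\{v\in\mathbb{R}^r:\|v\|_1=1\}$ and verify that the test function
\[
\varphi^c(\mathbf{x})=\Bigl|\sum_{|\alpha|\leq p}c^{\alpha}x_1^{\alpha_1}\cdots x_d^{\alpha_d}\Bigr|
\]
satisfies the three bounds required by the $\mathcal{C}$-mixing Bernstein inequality on the compact domain $\{\|\mathbf{x}\|_\infty\leq B_{\mathcal{X}}\}$: (i) $\|\varphi^c\|_\infty\leq 1+B_{\mathcal{X}}^p=:B$, (ii) $\mathrm{Var}(\varphi^c(\mathbf{x}^{(1)}))\leq B^2=:\sigma^2$, and (iii) the semi-norm $\|\varphi^c\|=\sup_{\mathbf{x}}\|\nabla\varphi^c(\mathbf{x})\|_1$ is finite. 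The last point is the only mildly delicate one, since $\varphi^c$ is the absolute value of a polynomial and therefore fails to be differentiable on the zero set of that polynomial; however $|\nabla\varphi^c|=|\nabla P^c|$ holds almost everywhere, and $\|\nabla P^c\|_1$ is bounded on the compact cube by a constant $A$ depending only on $p,d,B_{\mathcal{X}}$. Since the constant $A$ does not enter the right-hand side of \eqref{eqn:bernstein_C} explicitly, only its finiteness matters.

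Given these bounds, \eqref{eqn:bernstein_C} yields the concentration inequality \eqref{eqn:concentration} with
\[
\kappa(\zeta,m)=\frac{m\zeta^{2}}{8(\log m)^{2/\beta}(\sigma^{2}+\zeta B/3)}-\log 4.
\]
Substituting $\zeta=m^{-\delta}$ gives $\kappa(m^{-\delta},m)\sim m^{1-2\delta}/(\log m)^{2/\beta}$, which for any $\delta\in(0,1/2)$ dominates $3\delta r\log m$ once $m$ is large enough, verifying the hypothesis \eqref{eqn:kappa}. This is the same scaling exercise carried out in the i.i.d.\ and $\alpha$-mixing subsections, with the $(\log m)^{2/\beta}$ factor absorbed into ``$m$ sufficiently large.''

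Having checked \eqref{eqn:kappa}, Theorem \ref{thm:nspA} applies with $s=s_c+s_\theta$ and produces, with probability at least $1-m^{-\delta r}$, the null space property for $A=[Id_m,\Phi_U]$ of order $s$, provided $m$ also satisfies \eqref{eqn:boundm}. Theorem \ref{thm:main}(a) then gives exact recovery of both the sparse coefficient vector $\mathbf{c}$ and the corruption locations $\mathbf{e}$ as the unique solution of \eqref{eqn:main1}, which is precisely the claim. The only real obstacle is the semi-norm verification in step one; once that is in place everything reduces to the generic machinery already developed.
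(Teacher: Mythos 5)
Your proposal is correct and follows essentially the same route as the paper: identify $\kappa(\zeta,m)=\frac{m\zeta^2}{8(\log m)^{2/\beta}(\sigma^2+\zeta B/3)}-\log 4$ from the $\mathcal{C}$-mixing Bernstein inequality, check that it dominates $3\delta r\log m$ for $\delta\in(0,1/2)$ once $m$ is large, and invoke Theorem \ref{thm:nspA} and Theorem \ref{thm:main}(a). Your extra care in verifying the semi-norm bound for $\varphi^c=|P^c|$ (non-differentiability on the zero set, uniformity of $A$, $B$, $\sigma^2$ over the net $\mathcal{Q}$) is a point the paper passes over silently, and it is handled correctly.
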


\subsection{Uniformly Ergodic Markov Chain}
Let $\{\mathbf{x}^{(i)}\}$ be a Markov chain on $(\Omega, \mathcal{A})$ with a unique stationary distribution $\mu$.
We define: 
\[
P^k(x, A) = \Pr(\mathbf{x}_{k+i} \in A \mid \mathbf{x}_i = x).
\]
The chain $\{\mathbf{x}^{(i)}\}$ is called \textit{uniformly ergodic} if
\[
\sup_x{\| P^k(x, .) - \mu(.)\|_{TV}} \to 0 \quad \text{as} ~~ k \to \infty.
\]
In this case, there exists a positive integer $k_0$, $\lambda > 0$, and a probability distribution $\rho$ such that
\[
P^{k_0}(x, A) \geq \lambda \rho(A), \quad \text{for all} ~~ x \in \mathcal{X}, A \in \mathcal{A}.
\]

We have the following concentration inequality for uniformly ergodic Markov chain \cite{kontoyiannis2005relative}:
\begin{lemma}
	Let $\{\mathbf{x}^{(i)}\}$ be a stationary uniformly ergodic Markov chain. Then for any $\varphi : \mathcal{X} \to \mathbb{R}$ such that $\| \varphi \|_\infty \leq B$, any $\zeta > 0$, and $m \geq 1 + 3k_0B/(\lambda \zeta)$, we have 
\begin{equation}
\Pr \left ( \bigg | \frac{1}{m}\sum_{i=1}^m{\varphi(\mathbf{x}^{(i)})} -  \mathbb{E} [\varphi(\mathbf{x}^{(1)})] \bigg | \geq \zeta \right ) \leq  2 \exp \left [ - \frac{m-1}{2} \left ( \frac{\lambda}{k_0 B} \zeta - \frac{3}{m-1} \right )^2 \right ].
\label{eqn:bernstein_UE}
\end{equation}
\end{lemma}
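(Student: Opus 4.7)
My plan is to prove this Bernstein/Hoeffding-type concentration bound by exploiting the Doeblin minorization condition $P^{k_0}(x,A) \geq \lambda \rho(A)$ that comes with uniform ergodicity, and then reducing to an Azuma–Hoeffding-style argument on a Doob martingale. The natural first step is to rewrite $S_m - m\mathbb{E}[\varphi(\mathbf{x}^{(1)})]$ with $S_m = \sum_{i=1}^m \varphi(\mathbf{x}^{(i)})$ as a telescoping sum of martingale differences
\[
M_j - M_{j-1}, \qquad M_j := \mathbb{E}\bigl[S_m \mid \mathcal{F}_j\bigr] - \mathbb{E}[S_m], \qquad j=1,\ldots,m,
\]
where $\mathcal{F}_j = \sigma(\mathbf{x}^{(1)},\ldots,\mathbf{x}^{(j)})$. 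This reduces the problem to controlling $|M_j-M_{j-1}|$ uniformly, after which Azuma's inequality immediately yields a sub-Gaussian tail bound.

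The core estimate is then a coupling bound on the martingale increments. By the Markov property, for $i > j$ the contribution of $\mathbf{x}^{(i)}$ to $M_j - M_{j-1}$ is controlled by $\|P^{i-j}(x,\cdot)-P^{i-j}(x',\cdot)\|_{TV}$ for two possible states $x,x'$ at time $j$. The Doeblin condition $P^{k_0}(x,\cdot)\geq \lambda\rho(\cdot)$ yields the geometric contraction
\[
\sup_{x,x'}\|P^{nk_0}(x,\cdot)-P^{nk_0}(x',\cdot)\|_{TV} \leq (1-\lambda)^n,
\]
and splitting the range $i-j$ into multiples of $k_0$ with a residual gives
\[
\sum_{i=j+1}^{m}\|P^{i-j}(x,\cdot)-P^{i-j}(x',\cdot)\|_{TV} \leq \frac{k_0}{\lambda}.
\]
Since $|\varphi|\leq B$, each martingale difference therefore satisfies $|M_j-M_{j-1}| \leq 2B\cdot k_0/\lambda$ for $j\geq 2$, while the $j=1$ term contributes an extra boundary piece of order $B$.

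The next step is to plug these bounds into Azuma's inequality. For $m-1$ of the differences we get variation proxy $\bigl(2Bk_0/\lambda\bigr)^2$; the single boundary increment produces the additive correction $3/(m-1)$ seen in the statement (after rescaling by $1/m$ and absorbing the constant $3k_0B/(\lambda)$ through the condition $m\geq 1+3k_0B/(\lambda\zeta)$, which ensures the effective deviation $\zeta - 3k_0B/((m-1)\lambda)$ is positive and can be squared). Azuma then gives
\[
\Pr(|S_m - m\mathbb{E}\varphi|\geq m\zeta)\leq 2\exp\!\Bigl[-\tfrac{m-1}{2}\bigl(\tfrac{\lambda}{k_0 B}\zeta - \tfrac{3}{m-1}\bigr)^2\Bigr],
\]
which is exactly the stated bound.

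The main obstacle, and the step requiring the most care, is the coupling estimate that turns the Doeblin minorization into the $k_0/\lambda$ bound on the summed total-variation distances: one must verify carefully that the geometric factor $(1-\lambda)^n$ at multiples of $k_0$, together with the trivial bound $\|\cdot\|_{TV}\leq 1$ on the at most $k_0$ intermediate lags inside each block, telescopes cleanly to $k_0\sum_{n\geq 0}(1-\lambda)^n = k_0/\lambda$. Equally, the boundary correction $3/(m-1)$ must be tracked so that the constant $3$ matches; this essentially comes from bounding the first-block contribution by $3Bk_0/\lambda$ rather than $2Bk_0/\lambda$, absorbing an extra unit block's worth of variation near $j=1$.
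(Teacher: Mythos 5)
The paper does not actually prove this lemma; it is quoted verbatim from Kontoyiannis--Lastras-Montano--Meyn \cite{kontoyiannis2005relative}, so there is no internal proof to match against. Your overall strategy --- a martingale decomposition of $S_m-m\mathbb{E}[\varphi]$, increment bounds obtained from the Doeblin minorization via the coupling estimate $\sum_{n\ge 0}\sup_{x,x'}\|P^{n}(x,\cdot)-P^{n}(x,\cdot')\|_{TV}\le k_0/\lambda$, and then an Azuma-type bound --- is the standard route to this inequality (the cited proof is essentially this, organized through the solution of the Poisson equation $\hat\varphi-P\hat\varphi=\varphi-\pi(\varphi)$ with $\|\hat\varphi\|_\infty\lesssim k_0B/\lambda$ rather than through the Doob martingale $\mathbb{E}[S_m\mid\mathcal{F}_j]$; the two give the same increments up to truncation). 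So the skeleton is sound.

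There is, however, a genuine gap in the step that produces the stated constants, which is the entire content of the lemma. First, if you apply Azuma in the form $\Pr(|M_m|\ge t)\le 2\exp\bigl(-t^2/(2\sum_j c_j^2)\bigr)$ with $|M_j-M_{j-1}|\le c_j=2Bk_0/\lambda$, as your "variation proxy $(2Bk_0/\lambda)^2$" indicates, you obtain an exponent of order $\frac{m-1}{8}\bigl(\frac{\lambda\zeta}{k_0B}-\cdots\bigr)^2$, a factor of $4$ weaker than the claimed $\frac{m-1}{2}(\cdots)^2$. To recover the stated constant you must use the sharper Hoeffding-lemma form of Azuma, $\Pr\le 2\exp\bigl(-2t^2/\sum_j\ell_j^2\bigr)$, after verifying that each increment is \emph{conditionally supported on an interval of length} $\ell_j\le 2Bk_0/\lambda$ (which does hold, since $M_j-M_{j-1}=h_j(\mathbf{x}^{(j)})-\mathbb{E}[h_j(\mathbf{x}^{(j)})\mid\mathcal{F}_{j-1}]$ for a function $h_j$ of oscillation at most $2B\sum_{n\ge0}\bar\beta(n)$); this distinction is not in your write-up and is exactly where the factor of $4$ lives. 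Second, the provenance of the additive term $-3/(m-1)$ is asserted rather than derived: the remark that it "essentially comes from bounding the first-block contribution by $3Bk_0/\lambda$" is not a computation, and under stationarity the Doob-martingale version has no boundary term at all (the correction in \cite{kontoyiannis2005relative} arises from the boundary terms $\hat\varphi(\mathbf{x}^{(1)})$ and $P\hat\varphi(\mathbf{x}^{(m)})$ in the Poisson-equation telescoping, which is also where the hypothesis $m\ge 1+3k_0B/(\lambda\zeta)$ is used to keep the shifted deviation positive). Until these two accounting steps are carried out explicitly, the argument proves a concentration inequality of the right shape but not the inequality \eqref{eqn:bernstein_UE} as stated.
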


In this case, the concentration inequality \eqref{eqn:concentration} holds for 
\[\kappa(\zeta,m) = \dfrac{m-1}{2} \left ( \dfrac{\lambda}{k_0 B} \zeta - \frac{3}{m-1} \right )^2 -\log 2.\]
Observe that
\begin{align*}
	\kappa(m^{-\delta}, m) &= \frac{m-1}{2} \left ( \frac{\lambda}{k_0 B} m^{-\delta} - \frac{3}{m-1} \right )^2  - \log 2 \\
	&= \frac{\lambda^2 (m-1)m^{-2 \delta}}{2 k_0^2 B^2} - \frac{3 \lambda m^{-\delta}}{k_0 B} + \frac{9}{2(m-1)} - \log 2 \\
	&\geq  3\delta r \log m,
\end{align*}
for any $\delta \in \Big (0,\dfrac{1}{2}\Big)$ when $m$ is large enough. Therfore, we have the recovery result for uniformly ergodic Markov chain data.

\begin{theorem}
Fix $p\in\mathbb{N}$. Suppose we observe corrupted measurements
\[  \Big(\mathbf{u}^{(i)}= \mathbf{x}^{(i)}+ \pmb{\theta}^{(i)}, y^{(i)} =f(\mathbf{x}^{(i)})\Big)_{i=1}^m \subset {\mathbb{R}^d}\times {\mathbb{R}},\]
where the uncorrupted data $\{\mathbf{x}^{(i)}\}$ 
 form a stationary uniformly ergodic Markov chain and $L^\infty$-bounded by $B_{\mathcal{X}}$; the corruption $\{\pmb{\theta}^{(i)}\}$ is $L^\infty$-bounded by $B_{\Theta}$ and $s_{\theta}$-row sparse; and $f$ is a sparse multivariate polynomial with at most $s_c$ monomials of degree at most $p$. 
 If the stationary distribution $\mu$ of $\{\mathbf{x}^{(i)}\}$ is non-degenerate,
then when $m$ is sufficiently large, the polynomial coefficients of the function $f$ can be exactly recovered and the outliers can be successfully detected from the unique solution of \eqref{eqn:main1} with high probability.
\end{theorem}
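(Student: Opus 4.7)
The plan is to reduce this theorem to a direct application of Theorem \ref{thm:main}(a), after verifying that the uniformly ergodic Markov chain hypothesis yields the concentration inequality \eqref{eqn:concentration} with a function $\kappa$ satisfying \eqref{eqn:kappa}. Since the hypotheses on $\{\mathbf{x}^{(i)}\}$, $\{\pmb{\theta}^{(i)}\}$, and $f$ (boundedness, row sparsity, non-degeneracy of $\mu$, sparse polynomial form) already match those of Theorem \ref{thm:main}, the only remaining task is to identify the function $\kappa(\zeta,m)$ for the Markov chain setting and confirm that the growth condition \eqref{eqn:kappa} holds for an appropriate choice of $\delta>0$.

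First, I would take the Bernstein-type concentration inequality \eqref{eqn:bernstein_UE} for stationary uniformly ergodic Markov chains and rewrite it in the form required by \eqref{eqn:concentration}. Reading off the exponent gives
\[
\kappa(\zeta, m) \;=\; \frac{m-1}{2}\left(\frac{\lambda}{k_0 B}\,\zeta - \frac{3}{m-1}\right)^{2} - \log 2,
\]
valid for $m \geq 1 + 3k_0 B/(\lambda \zeta)$; this condition is automatically satisfied for $\zeta = m^{-\delta}$ and sufficiently large $m$, so $\kappa$ is well-defined in the regime we need. The function $\varphi$ appearing in the hypothesis of Theorem \ref{thm:nspA} is constructed inside that proof from the basis monomials evaluated on a covering set $\mathcal{Q}$, so the bound $\|\varphi\|_{\infty}\leq B$ required by the Markov chain inequality is inherited from the $L^{\infty}$-bound $B_{\mathcal{X}}$ on the data together with the monomial structure, giving a constant $B$ depending only on $p,d,B_{\mathcal{X}}$.

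Next, I would substitute $\zeta = m^{-\delta}$ and carry out the simple asymptotic computation already indicated in the preceding display: expanding the square yields
\[
\kappa(m^{-\delta}, m) \;=\; \frac{\lambda^{2}(m-1)\,m^{-2\delta}}{2 k_0^{2} B^{2}} \;-\; \frac{3\lambda\, m^{-\delta}}{k_0 B} \;+\; \frac{9}{2(m-1)} \;-\; \log 2,
\]
whose leading term $\Theta(m^{1-2\delta})$ dominates $3\delta r \log m$ whenever $\delta \in (0,1/2)$ and $m$ is large enough (since $r$ is fixed once $p,d$ are fixed). Fix any such $\delta$, for example $\delta = 1/4$; then \eqref{eqn:kappa} holds for all sufficiently large $m$.

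With $\kappa$ verified, all hypotheses of Theorem \ref{thm:nspA} are in force, so for $m$ satisfying both \eqref{eqn:boundm} and the threshold implicit in the asymptotic bound above, the matrix $[Id_m, \Phi_U]$ satisfies the null space property of order $s = s_c + s_{\theta}$ with probability at least $1 - m^{-\delta r}$. Theorem \ref{thm:main}(a) then yields exact recovery of the polynomial coefficients $\mathbf{c}$ and of the corruption vector $\mathbf{e}$ (whose support identifies the outliers) as the unique solution of \eqref{eqn:main1}, with high probability. I do not anticipate any genuine obstacle here: the only mildly delicate point is checking that the minimum sample-size requirement $m \geq 1 + 3k_0 B/(\lambda\zeta)$ built into the Markov chain Bernstein inequality is compatible with the choice $\zeta = m^{-\delta}$, but since this amounts to $m^{1-\delta} \geq 3k_0 B/\lambda$ plus lower-order terms, it is automatically absorbed into the ``$m$ sufficiently large'' hypothesis that the theorem statement already permits.
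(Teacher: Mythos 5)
Your proposal is correct and follows essentially the same route as the paper: read off $\kappa(\zeta,m)=\frac{m-1}{2}\bigl(\frac{\lambda}{k_0B}\zeta-\frac{3}{m-1}\bigr)^2-\log 2$ from the uniformly ergodic Markov chain Bernstein inequality \eqref{eqn:bernstein_UE}, substitute $\zeta=m^{-\delta}$, expand the square, and observe that the leading $\Theta(m^{1-2\delta})$ term dominates $3\delta r\log m$ for any $\delta\in(0,1/2)$ and $m$ large, so Theorem \ref{thm:nspA} and then Theorem \ref{thm:main}(a) apply. Your explicit check that the side condition $m\geq 1+3k_0B/(\lambda\zeta)$ is compatible with $\zeta=m^{-\delta}$ is a minor point the paper leaves implicit, but it does not change the argument.
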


%
%
%
%
%
%
\section{Numerics and Computational Results}
\label{sec:numerics}
In this section, we verify the exact recovery of polynomial coefficients from data sampled from a stationary process with sparse random corruptions. For each of the examples, we use exponentially strong $\alpha$-mixing data, although the other related processes would produce similar results. Our computational tests verify the recovery results from Theorem \ref{thm:main} as well as the method's dependence on parameters such as the sampling rate, polynomial degree, the sampling distribution of the corruption vector, and the sparsity of the corruption vector. 

\subsection{Algorithm}
To solve the constrained optimization problem \eqref{eqn:corruptedCS}, we use the well-known Douglas-Rachford algorithm \cite{lions1979splitting, combettes2011proximal}. Equation \eqref{eqn:corruptedCS}
can be written as:
\begin{equation}
\min_{w} \|w\|_1 \quad\text{subject to}\quad  y = {A}w,
\label{eqn:corruptedCS2}
\end{equation}
with the new variable $w= [e,c]$ and the augmented matrix ${A}= [  \lambda^{-1} \, Id_m, \, \Phi_{m\times r}]$. This can be relaxed to:
\begin{equation}
\min_{w} \|w\|_1 \quad\text{subject to}\quad  ||y - Aw||_2 \leq \sigma,
\label{eqn:corruptedCS3}
\end{equation}
for some (non-negative) parameter $\sigma$. Using Equation~\eqref{eqn:corruptedCS3} gives the experiments more flexibility, while also coinciding with Equation~\eqref{eqn:corruptedCS2} when $\sigma$ is sent to zero. Following the derivation from \cite{schaeffer2018extracting}, 
let $v$ be an auxiliary variable relaxing the constraints:
\begin{equation*}
(w,v)\in \mathcal{D}:=\left\{ (w,v) | \ v=Aw \right\} \quad  \text{and} \quad v\in B_\sigma(y) := \{ v \, | \ \|y-v\|_2 \leq \sigma \}.
\end{equation*}
Using the indicator function for a set $\mathcal{S}$:
  \begin{align*}
\mathbb{I}_{\mathcal{S}}(w):=
\begin{cases}
&\ \, 0, \ \ \text{if} \ \ w \in \mathcal{S}\\
 & \infty, \ \ \text{if} \ \ w \notin \mathcal{S},
 \end{cases}
  \end{align*}
Equation~\eqref{eqn:corruptedCS3} can be written as: \begin{equation}
\min_{(w,v)}\  g_1(w,v)+g_2(w,v), \label{eqn:DRobjectivefunction}
\end{equation}
where $g_1(w,v) :=  \| w \|_{1} + \mathbb{I}_{B_\sigma(y)}(v)$ and $g_2(w,v) := \mathbb{I}_{\mathcal{D}}(w,v)$.
The Douglas-Rachford algorithm uses the proximal operator within its iterations. The proximal operator of a function $g$ is defined as:
\[ 
\begin{aligned}
 \prox_{\gamma g}(z) :=  \argmin{x}  \left\{\frac{1}{2} \| z-x\|^2 + \gamma \, g(x)\right\},
 \end{aligned}\]
where $\gamma>0$ is a free parameter. As shown in \cite{schaeffer2018extracting}, the proximal operators for $g_1$ and $g_2$ are as follows:
\begin{align*}
 \prox_{\gamma g_1}(w,v) &= \left(  S_\gamma(w), \, \text{proj}_{B_\sigma(y)}(w) \right),
\end{align*}
where $\text{proj}_{B_\sigma}$ is the Euclidean projection onto the ball:
\begin{align*}
\text{proj}_{B_\sigma(y)}(v) :=
\begin{cases}
& \hspace{2.15cm} v, \ \ \text{if} \  ||v-y||_2 \leq \sigma\\
 & y +  \sigma\, \dfrac{v-y}{\|v-y\|_2}, \  \, \ \text{if} \ ||v-y||_2 > \sigma, \end{cases}
 \end{align*}
 and the soft-thresholding function $S$ (also known as \textit{shrink})   is defined component-wise as:
\begin{align*}
S_\gamma(w_j) = \text{sign}(w_j)\, \max\left(|w_j| - \gamma,0 \right). \end{align*}
The proximal operator for $g_2$ is defined by:
\begin{align*}
 \prox_{\gamma g_2}(w,v) &= \left( \, (Id_{m+r} + A^TA)^{-1}(w+A^T v), \ A(Id_{m+r} + A^TA)^{-1}(w+A^T v)\ \right).
\end{align*}
Define $\rprox_{\gamma g}(v) := 2 \prox_{\gamma g}(v)-v$, then the iterations for the Douglas-Rachford method are:
\begin{equation}
\begin{aligned}
(\tilde{w}^{k+1},\tilde{v}^{k+1})&=\frac{1}{2} \left( \ (\tilde{w}^{k},\tilde{v}^{k})+ \rprox_{\gamma g_2} \left(  \rprox_{\gamma g_1}\left(\tilde{w}^{k},\tilde{v}^{k}\right) \right) \ \right),\\
({w}^{k+1},{v}^{k+1})&=\prox_{\gamma g_1} (\tilde{w}^{k+1},\tilde{v}^{k+1}),
\end{aligned}\label{eq:DRalg}
\end{equation}
and will converge to a minimizer of Equation~\ref{eqn:corruptedCS3} for any $\gamma>0$ \cite{combettes2011proximal}.  Unless otherwise stated, in all of the computational examples we set $\gamma=1$ and $\sigma = 10^{-10}$.

\subsection{Computational Results}

Throughout the following examples, the uncorrupted data $\mathbf{x}^{(i)} \in \mathbb{R}^d$ are simulated as follows.
First, we simulate a sequence of bounded i.i.d. random variables $\mathbf{z}^{(i)}$, $i = 1, 2, \ldots m+3$.
Then, we set 
\[
\mathbf{x}^{(i)}  = \frac{\mathbf{z}^{(i)}}{16} + \frac{\mathbf{z}^{(i+1)}}{8} + \frac{\mathbf{z}^{(i+2)}}{4} + \frac{\mathbf{z}^{(i+3)}}{2}, \quad i = 1, 2, \ldots, m.
\]
It is readily to see that $(\mathbf{x}^{(i)})$ is a stationary exponentially strongly $\alpha$-mixing process.

\textbf{Example 1:} For the first example, we consider learning the function:
\begin{equation}
f(x) = 1 -2 x_1 x_2 x_3 + 5 x_1^5,
\label{eqn:example1}
\end{equation}
where $\mathbf{x} \in \mathbb{R}^3$. In Figure~\ref{fig:5thordereqn}, we display the probability of exact recovery of the polynomial coefficients in Equation~\eqref{eqn:example1} versus the sampling rate $\frac{m}{r}$, \textit{i.e.} the number of rows versus columns in $\Phi$. For this example, the matrix $\Phi$ contains all monomials up to fifth order in  $\mathbb{R}^3$ for a total of $56$ columns. We measure the probability of exact recovery by comparing the computed support set to the exact support set.  We compute the success rate over 100 trials with random sparse corruption, varying the sparsity of the corruption vector between $5$ (blue solid with dots), $10$ (red solid), and $12$ (yellow dotted). In all examples, the sparsity of the corruption vector refers to the number of non-zero elements, i.e. $s_\theta$. The threshold to achieve $90\%$ probability of success is marked by the horizontal dotted line. Note that the method is able to recovery the polynomial coefficients with $90\%$ probability even when the matrix $\Phi$ is under-sampled ($m<r$.) As the sparsity of the corruption vector increases, the probability of exact recovery decreases, as expected. For large enough sampling rates, in particular after 150 samples, the calculated probability of recovery is nearly one.

 \begin{figure}[t!]
\centering
 \includegraphics[width = 3.5 in]{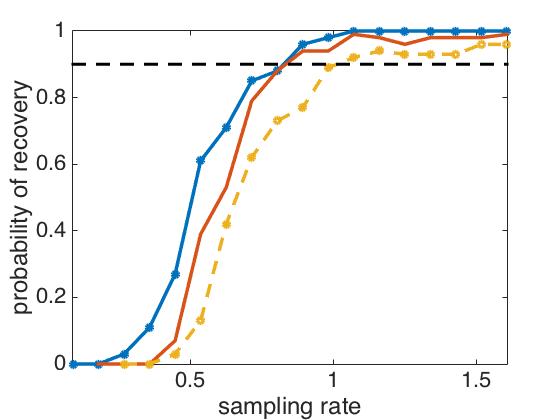}
\caption{Probability of exact recovery versus the sampling rate $\frac{m}{r}$ for the $5^{th}$ order equation \eqref{eqn:example1} with $d=3$ ($N = 56$ monomial terms) and the sparsity of the polynomial coefficient vector equal to $3$.  We measure the rate of exact recovery over 100 trials, varying the sparsity of the corruption vector: $5$ (blue solid with dots), $10$ (red solid), and $12$ (yellow dotted). The horizontal line is the threshold to achieve $90\%$ probability of success, which are all achieved with under-sampled data i.e. $m<r$. For large enough sampling rate (for example 150 samples), the probability of recovery approaches one.}
 \label{fig:5thordereqn}
 \end{figure} 

\textbf{Example 2:} 
The second example investigates the recovery of the function:
\begin{equation}
f(x) = -1 -2\, x_1^p,
\label{eqn:example2}
\end{equation}
for various $p>1$ where $\mathbf{x} \in \mathbb{R}^3$. In Table~\ref{Table1}, the probability of exact recovery versus the maximum degree of the polynomial is shown for two sampling rates: $15\%$ and $35\%$. The maximum degree of the monomials used in the dictionary is set to $10$ for all runs (for a total of $286$ monomial terms). The sparsity of the polynomial coefficient vector is $2$ and the sparsity of the corruption vector is set to $5$ in this example.  The corruption is drawn randomly for each trial.  For this example, the matrix $\Phi$ is normalized column-wise before applying the algorithm in order to prevent potential bias due to the column-scaling as $p$ increases. It was observed that normalization also helps with the numerical stability of the problem when using higher-order monomials.

\begin{table}[h!]
\caption{ Probability of exact recovery versus the maximum degree of polynomial in the unknown function (see Equation \eqref{eqn:example2}). The ambient dimension is $d=3$, the maximum degree of the candidate polynomials is $p= 10$ for all runs (thus $N=286$ monomial terms). The sparsity of the polynomial coefficients is equal to $2$ and the sparsity of the corruption vector is $5$. 
	We measure the rate of recovery over 100 trials for various $p$.} \label{Table1}
\begin{center}
 \begin{tabular}{|c||c|c|c|c|c|}\hline
  degree $p$: &2  &  3&  5 & 8 & 10    \\ \hline
  recovery, $15\%$ sampling:  & 83  &  95 &   85 &  82 & 79
  \\ \hline
  recovery, $35\%$ sampling:   & 98 &  98 &  99  &  99 & 99  \\ \hline
 \end{tabular}
\end{center}
\end{table}
For $15\%$ sampling (the second row of Table~\ref{Table1}), the computed probability of exact recovery is fairly stable (outside of the $p=3$ case). One can observe a small decrease in the recovery rate between $p=5$, $p=8$, and $p=10$. To test the stability in the high-recovery limit,  we set the sampling rate to $35\%$ (third row of Table~\ref{Table1}). For $35\%$ sampling,  as $p$ increases from $2$ to $10$ the recovery rate stays nearly the same ($98\%$ to $99\%$).  This would indicate that this approach is fairly stable to changes in the function, when the dictionary $\Phi$ is fixed.

\textbf{Example 3:} 
In this example, we investigate the recovery of the function:
\begin{equation}
f(x) = -8.5+ 9.6\,x_1\,x_4+ 0.3\, x_2\, x_5+ 5.7\, x_1^3+1.9 x_3\,x_9^2,
\label{eqn:example3}
\end{equation}
under various conditions on the corruption vector. The non-zero values of the corruption vector are sampled uniformly from $[-H,H]$. We compute the probability of exact recovery versus the values of $H$ and changes in the sparsity of the corruption vector.  The values of $H$ are chosen to match the range of the magnitudes of the coefficients in Equation~\eqref{eqn:example3}. For this example, we use all monomials up to degree three in  $\mathbb{R}^{10}$ for a total of $286$ monomial terms. The sampling rate is fixed at $17.5\%$ in order to highlight the variability between the recovery rates. In Table~\ref{Table2}, we vary $H$ between $0.5$, $2$, and $10$ and the sparsity of the corruption vector between $3$, $10$, and $15$. The scaling parameter $\lambda$ is set to $2$ for all experiments in Table~\ref{Table2}. In all cases, as $H$ increases, the recovery rate also increases. As the sparsity of the corruption vector increases, the recovery rate decreases, as expected. It was observed that the failures tended to occur on the vector $\mathbf{e}$, while the recovery of the polynomial coefficients were fairly stable to changes in $H$ and to changes in the sparsity of the corruption. In fact, it is possible, for certain cases, to recover the polynomial coefficients for relatively dense corruption vectors.    
\begin{table}[h!]
\caption{Probability of exact recovery versus the sparsity of the corruption vector (see Equation \eqref{eqn:example3}). The ambient dimension is $d=10$, the maximum degree of the candidate polynomials is $p=3$ (thus $N = 286$ monomial terms), and the sparsity of the polynomial coefficients is $5$.  The non-zero values in the corruption vector have uniform random values in $[-H,H]$. In each test, the sampling rate is $50\%$ and the parameter is set to $\lambda=2$. Note that tuning $\lambda$ may change the recovery rate.}\label{Table2}
\begin{center}
 \begin{tabular}{|c||c|c|c|}\hline
sparsity of the corruption &3 &10 &  15  \\ \hline
   recovery,  $H=0.5$ & 91   & 60 &  23    \\ \hline
  recovery,  $H=2$ & 98   &    65 &    28   \\ \hline
  recovery,  $H=10$ &  100 & 72  &       34
  \\ \hline
 \end{tabular}
\end{center}
\end{table}

\textbf{Example 4:} 
The last example details the recovery of the function:
\begin{equation}
f(x) = -1+2 \, x_1^2+0.5\,  x_5\,  x_{20},
\label{eqn:example4}
\end{equation}
when it is perturbed by the function $g(\mathbf{x}) = \epsilon \sin(2\pi x_1)$ (which is noise and is not part of the dictionary). We compute the probability of exact recovery of the polynomial coefficients in Equation~\eqref{eqn:example4} versus various values of $\epsilon$.  We use all monomials up to degree two in  $\mathbb{R}^{20}$ for a total of $231$ monomial terms. The sampling rate is fixed at $21.7\%$ and the sparsity of the corruption is set to $3$. In Table~\ref{Table3}, we can see that as $\epsilon$ increases, the recovery rate decreases. The average $\ell_1$ error (over the successful trials) is stable over the various tests.  For larger values of $\epsilon$, it may be possible to recover the sparse coefficients by adjusting $\lambda$. For example, by lowering $\lambda$ to $0.9$, the recovery rate for $\epsilon=10^{-3}$ becomes $94$ and the average $\ell_1$ error becomes $0.0141$.

\begin{table}[h!]
\caption{Probability of exact recovery (of the polynomial coefficients) versus the sparsity of the corruption vector (see Equation \eqref{eqn:example4}). The ambient dimension is $d=20$, the maximum degree of the candidate polynomials is $2$ (thus $N = 231$ monomial terms), and the sparsity of the polynomial coefficients is equal to $3$. In each test, the sampling rate is $21.7\%$ and the parameter is set to $\lambda=1$.}\label{Table3}
\begin{center}
 \begin{tabular}{|c||c|c|c|c|}\hline
$\epsilon$ &0 & $10^{-5}$ & $10^{-4}$ & $10^{-3}$     \\ \hline
   recovery            &99 &      98     &      98    &       85       \\ \hline
      $\ell_1$ error &0.0144 &  0.0120&0.0104    &     0.0156             \\ \hline
 \end{tabular}
\end{center}
\end{table}

\section{Conclusion}
Function approximation via $\ell_1$-optimization is a useful technique for automated learning. There are many results on the behavior of the $\ell_1$-solution when applied to i.i.d. data; however, theoretical results for dependent data is limited. The overall goal of this work is to show that under weaker conditions, exact and stable recovery is guaranteed. Specifically, we have shown that if the data is not independent but satisfies a suitable concentration inequality, one can provide a recovery guarantee for the learning function problem with corrupted data. Moreover, our proofs also show that the associated dictionary matrix generated from this type of data satisfies the null space property. It may be possible to weaken the requirements further while still preserving the core results. From numerical experiments, we observe that we need fewer number of measurements $m$ (compared to the theoretical bounds) to recover the underlying function. It is likely that new theories are needed to incorporate the sparsity level of the target function to relax the conditions on $m$. Specifically, it would be interesting to investigate the compressed sensing setting for dependent data where the number of measurements $m$ is less than $r$. In that direction, existing literature on the sparse linear regression problem in the compressive sensing setting considers only the ideal cases where the sampling matrix has i.i.d. Gaussian entries -- see for example  \cite{li2013compressed, xu2014system, foygel2014corrupted, price2018}, or is formed from bounded orthonormal systems \cite{adcock2017compressed}, randomly modulated unit-norm
frames, or randomly subsampled orthonormal matrices \cite{zhang2018uniform}. However, it is often the case that sparsity (or sparsity with respect to some nice bounded orthonormal basis) may not hold. One of the benefits of our results is that they hold when the target function is indeed sparse with respect to monomials, approximately sparse (where sparsity helps with overfitting), or even dense.

\section*{Acknowledgments} 
 L. S. T. H. acknowledges the support of startup funds from Dalhousie University, the Canada Research Chairs program, and NSERC Discovery Grant. 
H. S. acknowledges the support of AFOSR, FA9550-17-1-0125 and the support of NSF CAREER grant $\#1752116$. 
G. T. acknowledges the support of NSERC Discovery Grant. 
R. W. acknowledges the support of NSF CAREER grant $\#1255631$.


\end{document}